\title{Nonexistence of spontaneous symmetry breakdown  
of time-translation symmetry on general quantum systems:\\
Any macroscopic order parameter  moves not!}
\date{Aug   2023}
\author{Hajime Moriya}
\author{Hajime Moriya\thanks
{Faculty of Mechanical Engineering, Institute of Science and Engineering, Kanazawa University,
Kakuma, Kanazawa 920-1192, Japan.}}  
\newtheorem{theorem}{Theorem}[section]
\newtheorem{corollary}[theorem]{Corollary}
\newtheorem{prop}[theorem]{Proposition}
\theoremstyle{remark}
\numberwithin{equation}{section}
\newtheorem{remark}{Remark}
\begin{document}
\maketitle

\begin{abstract}
The   Kubo-Martin-Schwinger (KMS) condition 
 is a well-founded general  definition 
 of  equilibrium states on  quantum systems. 
  The  time invariance property  of  equilibrium states  
 is  one  of  its  basic consequences. 
 From  the time invariance of any equilibrium state 
 it follows that the  spontaneous  breakdown  of  time-translation  symmetry  
 is impossible. 
  Moreover,   triviality of  the temporal long-rang-order 
  is  derived   from  the KMS condition. 
Therefore,  
the manifestation    of genuine quantum time crystals 
 is impossible  as long as  
  the standard  notion   of spontaneous symmetry  breakdown is considered.
\end{abstract}
\newcommand{\cstar}{{C}^{\ast}}%
\newcommand{\wstar}{{W}^{\ast}}%
\newcommand{\Al}{{\cal{A}}}%
\newcommand{\AUT}{{\rm{Aut}}(\Al)}%
\newcommand{\Fin}{{\mathfrak F}}%
\newcommand{\Finf}{\Fin_{{\rm{loc}} }}%
\newcommand{\R}{{\mathbb{R}}}%
\newcommand{\Z}{{\mathbb{Z}}}%
\newcommand{\CC}{{\mathbb{C}}}%
\newcommand{\NN}{{\mathbb{N}}}%
\newcommand{\nonum}{\nonumber}%
\newcommand{\Zmu}{\Z^{\mu}}%

\newcommand{\al}{\alpha}
\newcommand{\alz}{\alpha_{z}}%
\newcommand{\ome}{\omega}%
\newcommand{\vp}{\varphi}
\newcommand{\lam}{\lambda}%
\newcommand{\Lam}{\Lambda}%
\newcommand{\bLam}{\bar{\Lambda}}%
\newcommand{\Lamc}{\Lam^{c}}%
\newcommand{\vpweak}{\tilde{\vp}}
\newcommand{\Lamp}{\Lam^{\prime}}%
\newcommand{\Lami}{\Lambda_1}%
\newcommand{\Lamii}{\Lambda_2}%
\newcommand{\Lamiii}{\Lambda_3}%
\newcommand{\tr}{{\rm{tr}}}%
\newcommand{\I}{{\mathrm{I}}}%
\newcommand{\Ix}{\I_x}%
\newcommand{\J}{{\mathrm{J}}}%
\newcommand{\K}{{\mathrm{K}}}%
\newcommand{\LL}{{\mathrm{L}}}%
\newcommand{\vome}{\varomega}
\newcommand{\Ome}{\Omega}%
\newcommand{\ometil}{\widetilde{\ome}}%
\newcommand{\vptil}{\widetilde{\vp}}%
\newcommand{\alvp}{\widetilde{\alpha}_{\vp}}%
\newcommand{\alvpt}{\widetilde{\alpha}_{\vp, t}}%
\newcommand{\All}{\Al_{\Lam}}%
\newcommand{\Allp}{\Al_{\Lamp}}%
\newcommand{\Allx}{\Al_{\Lam+x}}
\newcommand{\Alloc}{\Al_{\rm{loc}}}%
\newcommand{\Alx}{\Al_{\{x\}}}%
\newcommand{\Alzero}{\Al_{\{0\}}}%
\newcommand{\vNM}{{\mathfrak{M}}}%
\newcommand{\ZZ}{{\mathfrak Z}}%
\newcommand{\Bl}{{\mathfrak B}}%
\newcommand{\Imag}{{\bf{Im}}}%
\newcommand{\Imrm}{{\rm{Im}}}%
\newcommand{\vno}{\vNM_{\ome}}%
\newcommand{\vnv}{\vNM_{\vp}}%
\newcommand{\vnome}{\vNM_{\ome}}%
\newcommand{\State}{S(\Al)}%
\newcommand{\FACState}{S_{\rm{factor}}(\Al)}%

\newcommand{\GState}{S^{\theta, G}_{\rm{inv}}(\Al)}%
\newcommand{\STATIONState}{S^{\alpha, \R}_{\rm{inv}}(\Al)}%
\newcommand{\STATSTATE}{S_{\alpha_t}^{\rm{stat.}}(\Al)}%
\newcommand{\StateLam}{S(\All)}
\newcommand{\TRASTATE}{S^{\tau, \Gamma}_{\rm{inv}}(\Al)}%
\newcommand{\HOMOSTATE}{S^{\tau}_{\rm{homo.}}(\Al)}%
\newcommand{\perSTATE}{S^{\tau, \Delta}_{\rm{inv}}(\Al)}%
\newcommand{\perprimeSTATE}{S^{\tau,\Delta^{\prime}}_{\rm{inv}}(\Al)}%
\newcommand{\setKMSb}{S_{\alpha_t, \beta}(\Al)}%
\newcommand{\setACCUM}{S^{\lim}_{\rm{Gibbs}, \beta}(\Al)}%
\newcommand{\EXsetKMSb}{S^{\rm{ext}}_{\alpha_t, \beta}(\Al)}%
\newcommand{\setGround}{S_{\alpha_t, \infty}(\Al)}%
\newcommand{\setEQU}{S_{\alpha_t}^{\rm{Equil}}(\Al)}%
\newcommand{\Hil}{{\cal H}}%
\newcommand{\Hilvp}{{\Hil}_{\vp}}%
\newcommand{\pivp}{\pi_{\vp}}%
\newcommand{\Omevp}{\Ome_{\vp}}%
\newcommand{\Hilome}{{\Hil}_{\ome}}%
\newcommand{\piome}{\pi_{\ome}}%
\newcommand{\Omeome}{\Ome_{\ome}}%
\newcommand{\Hilpsi}{{\Hil}_{\psi}}%
\newcommand{\pipsi}{\pi_{\psi}}%
\newcommand{\Omepsi}{\Ome_{\psi}}%
\newcommand{\vnvppri}{\vNM_{\vp}^{\,\prime}}
\newcommand{\vnomepri}{\vNM_{\ome}^{\,\prime}}
\newcommand{\vnpsipri}{\vNM_{\psi}^{\,\prime}}
\newcommand{\Dbeta}{D_{\beta}}%
\newcommand{\Dbetac}{\overline{D_{\beta}}}%
\newcommand{\Dbetao}{\stackrel{\circ}{D}_{\beta} }%
\newcommand{\Lamn}{\Lam_0(n)}
\newcommand{\tilLamn}{\tilde{\Lam}_0(n)}
\newcommand{\Lamr}{\Lam_0(r)}
\newcommand{\Lamm}{\Lam_0(m)}
\newcommand{\AlLamn}{\Al_{\Lamn}}
\newcommand{\AlLamr}{\Al_{\Lamr}}
\newcommand{\AlLamm}{\Al_{\Lamm}}
\newcommand{\Allamxn}{\Al_{\Lamxn}}%
\newcommand{\Allamxr}{\Al_{\Lamxr}}
\newcommand{\hatA}{\widehat{A}}
\newcommand{\hatB}{\widehat{B}}
\newcommand{\hatAt}{\widehat{A(t)}}
\newcommand{\hatBt}{\widehat{B(t)}}
\newcommand{\hatAg}{\widehat{A(g)}}
\newcommand{\hatBg}{\widehat{B(g)}}
\newcommand{\hatgA}{\widehat{\theta_g(A)}}
\newcommand{\hatgB}{\widehat{\theta_g(B)}}
\newcommand{\hatAlam}{\hatA_{\Lam}}
\newcommand{\hatBlam}{\hatB_{\Lam}}
\newcommand{\hatAglam}{\hatAg_{\Lam}}
\newcommand{\hatAlamiii}{\hatA_{\Lamiii}}
\newcommand{\hatBlamiii}{\hatB_{\Lamiii}}
\newcommand{\hatAlamii}{\hatA_{\Lamii}}
\newcommand{\hatBlamii}{\hatB_{\Lamii}}
\newcommand{\hatAlamn}{\hatA_{\Lamn}}
\newcommand{\hatBlamn}{\hatB_{\Lamn}}
\newcommand{\hatvpA}{{\hatA}^{\vp}}
\newcommand{\hatvpB}{{\hatB}^{\vp}}
\newcommand{\hatpsiA}{{\hatA}^{\psi}}
\newcommand{\hatpsiB}{{\hatB}^{\psi}}
\newcommand{\hatomeA}{{\hatA}^{\ome}}
\newcommand{\hatomeB}{{\hatB}^{\ome}}
\newcommand{\hatvpAg}{{\hatAg}^{\vp}}
\newcommand{\hatvpBg}{{\hatBg}^{\vp}}
\newcommand{\hatpsiAg}{{\hatAg}^{\psi}}
\newcommand{\hatpsiBg}{{\hatBg}^{\psi}}
\newcommand{\hatomeAg}{{\hatAg}^{\ome}}
\newcommand{\hatomeBg}{{\hatBg}^{\ome}}
\newcommand{\hatvpgA}{{\hatgA}^{\vp}}
\newcommand{\hatvpgB}{{\hatgB}^{\vp}}
\newcommand{\hatpsigA}{{\hatgA}^{\psi}}
\newcommand{\hatpsigB}{{\hatgB}^{\psi}}
\newcommand{\hatomegA}{{\hatgA}^{\ome}}
\newcommand{\hatomegB}{{\hatgB}^{\ome}}
\newcommand{\hatvpAinf}{\hatvpA_\infty}
\newcommand{\hatvpBinf}{\hatvpB_\infty}
\newcommand{\hatpsiAinf}{\hatpsiA_\infty}
\newcommand{\hatpsiBinf}{\hatpsiB_\infty}
\newcommand{\hatomeAinf}{\hatomeA_\infty}
\newcommand{\hatomeBinf}{\hatomeB_\infty}
\newcommand{\hatvpgAinf}{\hatvpgA_\infty}
\newcommand{\hatvpgBinf}{\hatvpgB_\infty}
\newcommand{\hatomegAinf}{\hatomegA_\infty}
\newcommand{\hatomegBinf}{\hatomegB_\infty}
\newcommand{\hatvpAginf}{\hatvpAg_\infty}
\newcommand{\hatvpBginf}{\hatvpBg_\infty}
\newcommand{\hatomeAginf}{\hatomeAg_\infty}
\newcommand{\hatomeBginf}{\hatomeBg_\infty}
\newcommand{\hatvpAtN}{{\hatAt}^{\vp}}
\newcommand{\hatvpBtN}{{\hatBt}^{\vp}}
\newcommand{\hatpsiAtN}{{\hatAt}^{\psi}}
\newcommand{\hatpsiBtN}{{\hatBt}^{\psi}}
\newcommand{\hatomeAtN}{{\hatAt}^{\ome}}
\newcommand{\hatomeBtN}{{\hatBt}^{\ome}}
\newcommand{\hatvpAtNinf}{\hatvpAtN_\infty}
\newcommand{\hatvpBtNinf}{\hatvpBtN_\infty}
\newcommand{\hatpsiAtNinf}{\hatpsiAtN_\infty}
\newcommand{\hatpsiBtNinf}{\hatpsiBtN_\infty}
\newcommand{\hatomeAtNinf}{\hatomeAtN_\infty}
\newcommand{\hatomeBtNinf}{\hatomeBtN_\infty}
\newcommand{\hatvpAinfKt}{\hatvpAinf(t)}
\newcommand{\hatvpBinfKt}{\hatvpBinf(t)}
\newcommand{\hatpsiAinfKt}{\hatpsiAinf(t)}
\newcommand{\hatpsiBinfKt}{\hatpsiAinf(t)}
\newcommand{\hatomeAinfKt}{\hatomeAinf(t)}
\newcommand{\hatomeBinfKt}{\hatomeBinf(t)}
\newcommand{\hatvpAt}{\hatvpA(t)}
\newcommand{\Lamxn}{\Lam_x(n)}
\newcommand{\Lamxr}{\Lam_x(r)}
\newcommand{\Lamxm}{\Lam_x(m)}
\newcommand{\LamtoG}{\Lam \uparrow \Gamma}%
\newcommand{\bLamtoG}{\bLam \uparrow \Gamma}%
\newcommand{\LamtoD}{\Lam \uparrow \Delta}%
\newcommand{\bLamtoD}{\bLam \uparrow \Delta}%
\newcommand{\LamitoG}{\Lami \uparrow \Gamma}%
\newcommand{\LamiitoG}{\Lamii \uparrow \Gamma}%
\newcommand{\LamiiitoG}{\Lamiii \uparrow \Gamma}%
\newcommand{\sequeZLamtoG}{\Lamn \uparrow \Zmu}%
\newcommand{\sequeLamtoG}{\Lamn \uparrow \Gamma}%
\newcommand{\sequeLamtoD}{\tilLamn \uparrow \Delta}%
\newcommand{\limG}{\LamtoG;\; \Lam \Subset  \Gamma}%
\newcommand{\limD}{\LamtoD;\; \Lam \Subset  \Delta}%
\newcommand{\sublimG}{\bLamtoG;\; \bLam \Subset  \Gamma}%
\newcommand{\netLamtoG}{\{\LamtoG;\; \Lam \Subset  \Gamma\}}%
\newcommand{\netLamG}{\{\LamtoG\}}%
\newcommand{\netLamtoD}{\{\LamtoD;\; \Lam \Subset  \Delta\}}%
\newcommand{\netLamD}{\{\LamtoD\}}%
\newcommand{\subnetLamtoG}{\{\bLamtoD;\; \bLam \Subset  \Gamma\}}%
\newcommand{\subnetLamtoD}{\{\bLamtoD;\; \bLam \Subset  \Delta\}}%
\newcommand{\Delnet}{\{\LamtoD;\; \Lam \Subset  \Delta\}}%
\newcommand{\mLam}{m_{\Lam}}%
\newcommand{\mLamA}{\mLam(A)}%
\newcommand{\mLamB}{\mLam(B)}%
\newcommand{\mLamAset}{\mLam(\{A_x; \; x\in \Gamma\}
)}%
\newcommand{\mLamBset}{\mLam(\{B_x; \; x\in \Gamma\}
)}%
\newcommand{\mLamaltA}{\mLam(\alpha_t(A))}%
\newcommand{\omemLam}{{m}^{\ome}_{\Lam}}%
\newcommand{\omemLamA}{\omemLam(A)}%
\newcommand{\omemLamAg}{\omemLam(\theta_g(A))}%
\newcommand{\omemLamB}{\omemLam(B)}%
\newcommand{\omembLam}{{m}^{\ome}_{\bLam}}%
\newcommand{\omembLamA}{\omembLam(A)}%
\newcommand{\omembLamAg}{\omembLam(\theta_g(A))}%
\newcommand{\omembLamB}{\omemLam(B)}%
%
\newcommand{\vpmLam}{m^{\vp}_{\Lam}}%
\newcommand{\vpmLamA}{\vpmLam(A)}%
\newcommand{\vpmLamB}{\vpmLam(B)}%
\newcommand{\vpmBacc}{\hatvpB_{\infty \, \netLamtoG}}
\newcommand{\rhoGinf}{\varrho^{\beta,\lim\rm{Gibbs}}}
\newcommand{\vacinf}{\rho^{\infty, \rm{Ground}}}
\newcommand{\rhoGlam}{\rho^{\beta,\rm{Gibbs}}_{\Lam}}
\newcommand{\rhoGlami}{\rho^{\beta,\rm{Gibbs}}_{\Lami}}
\newcommand{\varrhoGlam}{\varrho^{\beta,\rm{Gibbs}}_{\Lam}}
\newcommand{\vaclam}{\rho^{\infty, \rm{Ground}}_{\Lam}}
\newcommand{\corvpAB}{f^{\vp}_{A, B}}
\newcommand{\corpsiAB}{f^{\psi}_{A, B}}
\newcommand{\coromeAB}{f^{\ome}_{A, B}}
\newcommand{\corvpAA}{f^{\vp}_{A, A}}
\newcommand{\corpsiAA}{f^{\psi}_{A, A}}
\newcommand{\coromeAA}{f^{\ome}_{A, A}}
\newcommand{\grvpAB}{g^{\vp}_{A, B_r}}
\newcommand{\glvpAB}{g^{\vp}_{A_l, B}}
\newcommand{\glrvpAB}{g^{\vp}_{A_l, B_r}}
\newcommand{\cortilpsiAB}{f^{\tilde{\psi}}_{A, B}}
\newcommand{\cortilpsiAA}{f^{\tilde{\psi}}_{A, A}}

\newcommand{\corweakAB}{f^{\rhoGinf}_{A, B}}
\newcommand{\corweakABzero}{f^{\rhoGinf}_{A_0, B_0}}
\newcommand{\corometilAB}{f^{\ometil}_{A, B}}
\newcommand{\corlocGibbsAB}{f^{\beta, \rm{WOK}}_{\Lam;\, \hatAlam, \hatBlam}}%
\newcommand{\corlocvacAB}{f^{\infty, \rm{WOK}}_{\Lam;\, \hatAlam, \hatBlam}}
\newcommand{\altLam}{\alpha_{\Lam, \, t}}
\newcommand{\altLamii}{\alpha_{\Lamii, \, t}}
\newcommand{\altLamn}{\alpha_{\Lamn,\, t}}
\newcommand{\altLamm}{\alpha_{\Lamm,\, t}}
\newcommand{\altI}{\alpha_{\I,\, t}}
\newcommand{\altJ}{\alpha_{\J,\, t}}
\newcommand{\altIx}{\alpha_{\Ix,\, t}}
\newcommand{\hatH}{\hat{H}}
\newcommand{\hatU}{\hat{U}}
\newcommand{\hath}{\hat{h}}
\newcommand{\hx}{\hath_x}
\newcommand{\hzero}{\hath_{0}}
\newcommand{\Hlam}{\hatH_{\Lam}}
\newcommand{\Hlamfree}{\hatH_{\Lam}^{\rm{free}}}
\newcommand{\Hlamn}{\hatH_{\Lamn}}
\newcommand{\Hlamm}{\hatH_{\Lamm}}
\newcommand{\Home}{{H}_{\ome}}
\newcommand{\hatHvp}{\hat{H}_{\vp}}
\newcommand{\parext}{\partial_{{\rm{ext}}}}%
\newcommand{\parinside}{\partial_{{\rm{inside}}}}%
\newcommand{\lplus}{\Lam\cup \parext \Lam}%
\newcommand{\Allplus}{\Al_{\Lam\cup \parext \Lam}}%
\newcommand{\Lamkiku}{\breve{\Lam}}
\newcommand{\Lamhanpa}{\parext \Lamkiku}
\newcommand{\Lamkikueps}{\breve{\Lam}_\varepsilon}
\newcommand{\Lamhanpaeps}{\parext \Lamkikueps}

{{\textbf{Key Words}}} 
Genuine quantum time crystals; No-go theorem; KMS condition. \\
{{{Mathematics Subject Classification 2000: 82B03, 82B10}}}  
\section{Introduction}
\label{sec:INTRO}
 The  manifestation of  self-organized temporal periodic structures 
  of quantum  states was  proposed 
 by Wilczek  \cite{WIL}. 
The   crystal structures in the time direction   for \emph{equilibrium} 
 states   are   called 
\emph{genuine} quantum time crystals  \cite{KKlong}
to   distinguish   them from  non-equilibrium   
quantum  time crystals   which have been observed  
 experimentally 
  \cite{NATURE17MIK}  \cite{NATURE17MON}  \cite{NATURE21}.

  The existence of genuine  quantum time crystals was questioned
 soon after  the publication of  \cite{WIL}.
The work     \cite{BRUNOc, BRUNOnogo} by Bruno 
 verified   impossibility of spontaneously rotating time crystals
 for  the   original  quantum-time-crystal
 model  \cite{WIL}.
 The work  \cite{WO} by  Watanabe-Oshikawa
 provided    a general statement  of 
 absence of genuine quantum time crystals. 
In  \cite{WO} and  its  extension     
   \cite{wok}  by  Watanabe-Oshikawa-Koma (WOK), 
   non-trivial behaviors of   certain 
   temporal correlation functions 
  are identified  with  quantum time crystals.
  This criterion  
 of  quantum time crystals proposed by  
   \cite{WO, wok}  has been widely 
  used  in the   research   
  of  quantum time crystals, see    \cite{KHE} \cite{SACHA}.

In  this review, we   provide  
 other   no-go statements 
 of genuine quantum time crystals. We shall 
  compare them  with  the above mentioned 
 previous  works  \cite{WO, wok}. 
There are   several  reasons to do so. 
First, the  time-invariance  
for equilibrium states  (Proposition \ref{prop:INV}) is a  
 basic  fact of $\cstar$-algebraic quantum statistical mechanics
  established  already in  the  1970s 
 \cite{BR}. It 
  immediately    implies 
 the impossibility of genuine quantum time crystals  (Theorem \ref{thm:MAIN}). 
 That's the end of the story of \emph{genuine}  quantum time crystals.  
 Unfortunately, however, this   simple consequence of   
  \cite{BR}  has  been ignored   
 in  previous research works   on  quantum time crystals   
  \cite{HASACHA2} \cite{KHE} \cite{SACHA}.
Second, 
   the absence of temporal long-range order 
 (Corollary \ref{coro:NoLRO}) in the $\cstar$-algebraic formulation  
and  the  no-go  statement of \cite{WO, wok}
    are  different 
in   their   formulations and  mathematical  derivations, 
although they   look similar.
Third,  the 
 $\cstar$-algebraic   no-go statements
of genuine quantum time crystals
 are  more  thorough in terms of mathematical rigor and  
have much  wider generality than  \cite{WO, wok}. 
To establish clear   argument,   
 we shall   specify   exact meanings of  the following key   notions:

\begin{enumerate}
 \item Quantum systems
 \item  Equilibrium  
\item  Quantum time evolutions 
\item  Spontaneous symmetry  breakdown (SSB)
\end{enumerate}

For  the  first    in the above list, 
we formulate   our   quantum systems   by quasi-local  
   $\cstar$-systems. 
For the second, 
we define equilibrium  by  the   Kubo-Martin-Schwinger (KMS) condition. 
For the third, we formulate  
  Heisenberg  quantum  time evolutions 
   by   $\cstar$-dynamics.
For the fourth, the notion of SSB 
  is   defined by   the multiplicity  of KMS states (or ground states)
with respect to  the group action of a given  symmetry. 
 Namely, we use the   standard  $\cstar$-algebraic formalism
  \cite{BR}   which   will be  given  in Section~\ref{sec:SETUP}. 
 
In  the general $\cstar$-algebraic formulation as above, 
  non-existence  of time-translation symmetry 
 breakdown is almost obvious.
 As a consequence, the notion of 
genuine quantum time crystals  is  negated  with no effort.  
Even stronger, 
   any   non-trivial  order in the time direction,    
such as periodic,  quasi-periodic,  droplet, and  chaotic  order 
  is forbidden for  equilibrium states. 
 The precise  statement is 
  given in  Theorem \ref{thm:MAIN}  in  Section~\ref{sec:NON}. 
 
The   no-go statement   of genuine time crystals (Theorem \ref{thm:MAIN})
 is actually  not our finding;  
 it is   a   direct consequence of some basic  facts of    
  $\cstar$-algebraic quantum statistical mechanics \cite{BR}. 
As we will   see in this review, 
 the $\cstar$-algebraic formulation 
  has  several 
 advantages 
 over the   so called   
 ``the box-procedure method''   
 or   ``Gibbs Ansatz'' 
 widely  used in physics   to investigate 
 dynamical properties    
    of   equilibrium states (such as  genuine quantum time crystals). 
In terms of mathematics,  
 there are   several general  
    results  of the   KMS condition  \cite{BR}. 
 In terms of physics,  
  the  $\cstar$-algebra 
 gives  a  natural  formulation    of 
 quantum time evolutions which are generically non-local.
 We  refer  to    \cite{KAS78} by Kastler 
  for  some  conceptual points  of   $\cstar$-algebraic 
 quantum statistical mechanics.

This review is written 
  in a self-contained manner 
 so that the readers   can  understand  our statements  
 without referring to  the $\cstar$-algebraic literature.
We stress  that our  statements 
 will not  be   guaranteed,  unless   
 their    assumptions are completely    satisfied.
In particular, 
 the original   quantum crystal model \cite{WIL}    
 made by  many-particles  
on an Aharonov-Bohm  ring is beyond  the scope of 
our   $\cstar$-algebraic formulation.
We will   not  address 
   {\emph{non-equilibrium}} quantum time crystals, which 
  do  not conflict with   fundamental laws  of physics.

\section{Mathematical formulation}
\label{sec:SETUP}
In this section, we provide    our mathematical formulation  
based on  \cite{BR}.
We   make use of  
 the  $\cstar$-algebraic  definitions of equilibrium states, 
spontaneous symmetry breakdown, 
 and long-range orders.  
In addition to the basic reference \cite{BR}, 
 we may  refer to   \cite{SEW2014} \cite{SIM}.

\subsection{Quantum systems}
\label{subsec:QUNANTUMSYETEM}

Let  $\Gamma$  denote an infinite space  such as 
$\R^{\mu}$ and  $\Z^{\mu}$ ($\mu\in \NN$). 
$\Gamma$ has  a natural   additive  
group structure: $\xi_{x}(y):=y+x\in \Gamma$ for $x,y \in \Gamma$.
Let $\Fin$ be a  set of all subsets of $\Gamma$. 
If $\Lam\in\Fin$ has  finite volume $|\Lam|<\infty$,
 then we denote    `$\Lam\Subset \Gamma$'.
Let  $\Finf$ denote the  set of all 
 finite subsets 
(or the set of  sufficiently many finite subsets 
of  certain shapes)
 of $\Gamma$.
Let $\Al$ denote  a quasi-local $\cstar$-system on $\Gamma$
 describing the  infinite quantum system under consideration.
 The total system $\Al$ includes 
 a family of its subsystems 
$\{\All; \  \Lam\in \Fin\}$  indexed by $\Fin$. 
 The local algebra  $\Alloc:=\bigcup_{\Lam \in\Finf}\All$ 
is a norm  dense subalgebra of $\Al$.
 For any two disjoint subsets   $\Lam$ and $\Lamp$  the 
following  commutation relations are satisfied:
\begin{equation*}
  [A,\; B]\equiv AB-BA=0  \quad \forall
 A\in\All, \ \forall B \in \Allp.
\end{equation*}
The above condition  is called the local commutativity. 
Let $\{\tau_x\in\AUT,\;x\in \Gamma\}$ denote 
the group of  space-translation automorphisms  on $\Al$. 
The identity  $\tau_x (\All) =\Allx$ holds 
 for  any $\Lam\in\Fin$ and $x\in \Gamma$.
 
A state on $\Al$
is  a normalized positive linear 
 functional on $\Al$.
We denote the  set of all states on $\Al$ by 
$\State$. It is  an affine space with the affine combination of  states.
For each  $\ome\in \State$ 
 the triplet  $\bigl(\Hilome,\; \piome,\; \Omeome  \bigr)$
  denotes  the  Gelfand-Naimark-Segal (GNS)
 representation  associated with $\ome$.
 The GNS representation generates the von Neumann algebra  
$\vno:=\piome(\Al)^{\prime\prime}$ on the GNS Hilbert space $\Hilome$. 
The commutant of $\vno$  is given by  
$\vnomepri:=\bigl\{X \in \Bl(\Hilome); \  [X, \ Y]=0 \;
 \forall Y \in \vno \bigr\}$, and 
 the  center of $\vno$ is given by $\ZZ_\ome:=\vno \cap \vnomepri$.
The center $\ZZ_\ome$ contains  all macroscopic 
observables  with respect to  $\ome$. 
 Thereby  the center $\ZZ_\ome$ determines 
 the    macroscopic (thermodynamical) 
information about $\ome$. 
In general,  any  order parameter that  distinguishes different phases 
 has its   corresponding element  in   the center.
 For example,  the energy density,  
 the magnetization  per unit volume 
    for any  space translation invariant state, and  
  the  staggered magnetization per unit cell
  for any  space-periodic  state  belong  to the   center.

A state $\ome\in \State$
is called   a factor  state  
 if its center is trivial  $\ZZ_\ome=\CC \rm{I}$,
 where $\rm{I}$ is the identity operator on $\Hilome$. 
The set of all factor states on $\Al$ is denoted by $\FACState$.
Any   $\ome\in \FACState$ is known to satisfy   
the uniform cluster property with respect $\{\tau_x\in\AUT,\;x\in \Gamma\}$.  
 Hence   factor states  are    identified  with    pure phases. 
Each  $\ome\in \State$ has its  factorial (central) decomposition:
\begin{equation}
\label{eq:factor-dec}
\ome = \int  d\mu(\ome_{\lam})\ome_{\lam},\quad 
\ome_{\lam}\in \FACState, 
\end{equation}
where  $\mu$ denotes the  unique  probability  measure on 
 $\FACState$   determined  by $\ome$.
Note that  
 $\ome\in \State$ is not necessarily translation invariant.

\subsection{Quantum time evolution}
Assume that 
our  stationary Heisenberg-type  quantum  time evolution  is  given   by
a $\cstar$-dynamics, i.e.  
  a one-parameter group of automorphisms $\{\alpha_t\in\AUT,\;t\in \R\}$ 
 on  the quasi-local  $\cstar$-algebra  $\Al$.  
  We need   continuity for 
 $\{\alpha_t\in\AUT,\;t\in \R\}$ with respect to $t\in \R$.
We assume the strong continuity:  
\begin{equation}
\label{eq:strong}
 \lim_{t\to 0} \alpha_t (A)=A \ {\text{in norm for each fixed}} \ A\in \Al. 
\end{equation}
It is known that any  short-range  quantum spin lattice model 
 generates a strongly continuous time evolution 
  $\{\alpha_t\in\AUT,\;t\in \R\}$, see the pioneer work 
  \cite{ROB68}  and \cite{BR}.   
For  continuous  quantum systems,  we may  assume    
 $\sigma$-weakly  continuity for quantum   time evolutions
 in terms of the   GNS representation  of  sufficiently many
  states (including all equilibrium states) on $\Al$. 
In mathematics, these  quantum  systems 
are  called  $\wstar$-dynamical systems, see \cite{BR} \cite{D-SEW1970}. 

\subsection{Equilibrium states}
\label{sub:EQU}
\subsubsection{The KMS condition}
\label{subsubsec:KMS}
We define  equilibrium states  on $\Al$ 
  by the Kubo-Martin-Schwinger (KMS) condition,
 which  names after   Kubo \cite{KUBO},  Martin and Schwinger \cite{MS}.
We  capture  its mathematical
 formulation  due to
Haag-Hugenholz-Winnink \cite{HHW} in the following.

Let  $\beta \in \R_{+}\equiv \{a\in \R;\; a\ge 0\}$ denote an inverse temperature.    
Let  $\Dbeta:=\Bigl\{z \in \CC;\  0 \le {\Imag}z \le \beta    \Bigr\}$
   and  $\Dbetao:=\Bigl\{z \in \CC;\  0 < {\Imag}z < \beta
   \Bigr\}$. 
A state $\vp$ on $\Al$ is  called a 
 $\beta$-KMS state for  the quantum time evolution 
 $\{\alpha_t\in\AUT,\;t\in \R\}$
  if  there exists a complex function $F_{A,B}(z)$
 of $z \in \Dbeta$  for 
every $A,  B \in \Al$  such that 
 $F_{A,B}(z)$ is analytic in $\Dbetao$,
and the following relation holds; 
\begin{equation}
\label{eq:KMS-temporal}
F_{A,B}(t)=\vp \bigl(A \alpha_{t}(B) \bigr),\quad
F_{A,B}(t+i \beta)=\vp \bigl(\alpha_{t}(B)A \bigr), \quad
  \forall t\in \R.
\end{equation}
The set of $\beta$-KMS states 
 for  
 $\{\alpha_t\in\AUT,\;t\in \R\}$
 is denoted by  $\setKMSb$. 
Ground states are   defined  by the   
  KMS condition at  $\beta=\infty$.
 The set of ground  states  for  
 $\{\alpha_t\in\AUT,\;t\in \R\}$ 
 is denoted by  $\setGround$. 
 We denote  the set of all equilibrium states
 for   all  positive and infinite $\beta$  
 with respect to   the same quantum  time evolution
 $\{\alpha_t\in\AUT,\;t\in \R\}$
by $\setEQU$.

 The KMS condition  
is   a well-founded   definition of equilibrium in   quantum systems. 
Every  KMS state   satisfies    the minimum-free-energy condition. 
Vice versa, any  minimum-free-energy state given 
 by  the variational formula
 satisfies  the KMS condition, 
 see  \cite{ARKMSVAR}  \cite{ARAKISEWELL77}  \cite{RUE67}.
 (For  $\beta=\infty$,  the minimum energy condition   
for ground states is given  in \cite{BKR}.) 
 The  KMS condition implies  the 
   passivity formulated by Pusz-Woronowicz \cite{PWKMS}. 
 It  is   a kind  of    ``the second law of thermodynamics'': 
 No work can be obtained  from an isolated system 
 in  equilibrium by varying adiabatically  external parameters 
of the quantum time evolution  \cite{LEN}.

\subsubsection{Time invariance and  factorial  decomposition of equilibrium states}
\label{subsubsec:TIMEINV}
To discuss  temporal properties   of equilibrium states,  
 the KMS condition  has  several   advantages  over the 
  ``Gibbs Ansatz''    that is based on  local Gibbs ensembles on finite boxes,    since   the KMS condition   is  tightly related 
to  the quantum time evolution  by  definition.
As a notable  example,   the time invariance 
of \emph{any}   $\vp\in \setEQU$ follows from the  KMS-condition 
 \eqref{eq:KMS-temporal}, namely, 
\begin{align}
\label{eq:KMS-inv}
\vp\left(\alpha_t(A)\right)=\vp(A) \quad \text{for all } A\in \Al
 \quad (\forall t\in \R)
\end{align}
 holds 
regardless of whether 
 $\vp\in \setEQU$ is  a factor state (pure phase)  or not 
 (statistical mixture of multiple phases).
 
Next, 
  let us  address the structure of the set   of equilibrium states.  
For any  finite  $\beta\in \R_{+}$,  the affine  space  $\setKMSb$ 
 is  a Choquet simplex, and   the set of extremal points
$\EXsetKMSb$ in
 $\setKMSb$    coincides with  $\setKMSb \cap \FACState$.
Hence each  $\vp\in \setKMSb$ is uniquely 
 written as an affine sum of $\EXsetKMSb$: 

\begin{equation}
\label{eq:KMSfactor-dec}
\vp = \int  d\mu(\vp_{\lam})\vp_{\lam},\quad 
\vp_{\lam}\in \EXsetKMSb, 
\end{equation}
where  $\mu$ is  a unique  probability  measure 
  determined  by $\vp$. 
Any factor  state $\vp_{\lam}$ 
appearing  in the above factorial  decomposition \eqref{eq:KMSfactor-dec} 
 is a KMS state,  which  is obviously time-invariant. 
The above general structure of KMS states  
 follows from  the fact that 
  the center ${\ZZ}_\vp$ of any $\vp\in \setKMSb$
is pointwise  fixed   under the   time evolution \cite{ARmulti} \cite{BR}.

For $\beta=\infty$, a similar  statement  holds  as follows. 
  For any  $\vp\in \setGround$, consider      
 its state-decomposition 
$\vp = \int  d\mu(\vp_{\lam})\vp_{\lam}$, 
$\vp_{\lam}\in \State$. Then  each 
 $\vp_{\lam}$ belongs to $\setGround$ by the face property of $\setGround$.   
 Thus,  it  is obviously  invariant under the time evolution.
 The above general  structure of  ground  states   
 follows from  the fact that 
the commutant $\vnvppri$ (and therefore ${\ZZ}_\vp$) 
of any $\vp\in \setGround$
is pointwise fixed   under the time evolution \cite{ARbook} \cite{BR}.

\begin{remark}
\label{rem:MULTI}
The KMS condition  allows   
 multiple equilibrium phases (non-factor KMS states).
 Thus its  properties (such as the time-invariance)  hold 
 irrespective of the existence of  SSB.
\end{remark}

\begin{remark}
\label{rem:KMS-mukankei}
The KMS condition  does not require  specific dynamical assumptions such as  
 ergodicity, mixing, thermalization and so on.
 The work \cite{KHE}  suggested  
  special   treatments depending on 
 thermalization and non-thermalization dynamics  
  for   the proof of   \cite{WO}.
At least, it is sure that     
 such  divided treatments  are unnecessary for our statements. 
\end{remark}

\begin{remark}
\label{rem:KMS}
It looks that 
 the Gibbs-Ansatz is a more 
 familiar description  of equilibrium states on quantum systems
  than the KMS condition.
For  quantum spin lattice systems,  
 there is a  Gibbssian formulation  
 due to  Araki-Ion \cite{ARAKIION74}.
The Araki-Ion Gibbsian condition 
 is reduced  to the DLR condition  
for classical interactions.  Although
  the DLR condition is   a natural mathematical formulation 
  of  Gibbs states on  infinite classical 
systems \cite{SIM}, 
the  Araki-Ion Gibbsian condition  
  seems not allow such an  intuitive interpretation; 
   it is  an involved formula given 
 in terms of  the Dyson's perturbation method.\footnote{Prof. Araki told that 
 the Araki-Ion Gibbsian condition was 
an intermediate technical condition by which  
    the KMS condition and 
 the variational  principle can be  connected.} 
We only  mention  some recent proposals 
 \cite{Aff-preprint}  \cite{JAK}  of how to define     
``Gibbs  states'' in  quantum  spin lattice systems.
 Furthermore, the KMS condition
 may have wider  equilibrium states 
 than those determined by the  Gibbs ansatz  \cite{SMED}.
Thus, we use  the KMS condition as  the   
 primary notion  of quantum 
 equilibrium states.
We will come back to this point 
 in Section~\ref{subsubsec:Equilibrium-inclusion}.  
\end{remark}

 \subsubsection{Symmetry and spontaneous  symmetry breakdown}
\label{subsubsec:SSB}

We shall  make a  detour in order  to  establish 
  ``the absence  of  spontaneous  breakdown of time-translation
  symmetry'' which is essentially equivalent to  
   the time invariance 
 of equilibrium states  given in   Section~\ref{subsubsec:TIMEINV}. 
 We prepare   some  notations  related to the  
spontaneous  symmetry breakdown (SSB) in  the $\cstar$-algebraic language.

Let $G$ be a  group with its unit element $e$ and 
 let $(G, \theta)$ denote  a  faithful representation of $G$ into $\AUT$. 
Namely,   
\begin{align*}
\theta_g\in \AUT, \quad  g \in G, \nonumber \\
\theta_e ={\rm{id}}\in \AUT, \nonumber \\
\theta_g \ne {\rm{id}}\in \AUT \  \text{for}\ \forall g\ne e \in G,\nonumber \\ \theta_{g_1}\circ \theta_{g_2}=\theta_{g_1 g_2},\quad  g_1, g_2 \in G.  
\end{align*}
The  action of $G$ upon  $\State$ is given  by 
${\theta_g}^{\ast}\ome :=\ome\circ \theta_g\in \State$ ($g\in G$) 
for each  $\ome\in \State$. 
If ${\theta_g}^{\ast}\ome =\ome\in \State$ for all $g\in G$, then 
 $\ome$ is  called a $G$-invariant state.
The set of all $G$-invariant  states  is denoted by $\GState$.

A state   $\ome\in \State$ that is  invariant under the time evolution 
 $\{\alpha_t\in\AUT,\;t\in \R\}$ is called a time-invariant 
 or stationary state.
The set of all time-invariant  states $\STATIONState$
 will be denoted  simply by $\STATSTATE$.
As we have seen in \eqref{eq:KMS-inv}
\begin{align}
\label{eq:STATION}
\setEQU\subset \STATSTATE.
\end{align}

A state  $\ome\in \State$ that is  invariant under
the space-translation group 
$\{\tau_x,\;x\in \Gamma\}$  is  called 
 a  translation-invariant state. 
Let  $\Delta$ be a   crystallographic subgroup of $\Gamma$, 
an infinite sub-lattice   of $\Gamma$ such that 
 the quotient group $\Delta/\Gamma$ is finite.
If  $\ome\in \State$ is invariant under 
 $\{\tau_x,\;x\in \Delta\}$,  
then it is called  a  spatially  periodic  state with respect to  $\Delta$. 
We denote  the sets  of all translation-invariant states
 and  all spatially periodic states
  by  $\TRASTATE$ and  $\perSTATE$, respectively. 
Together, translation invariant and  spatially
 periodic states are   called 
 homogeneous  states. 
We denote  the set of all  homogeneous  states on $\Al$ by $\HOMOSTATE$.

We define  spontaneous symmetry breakdown (SSB)   as follows.
If 
\begin{equation}
\label{eq:DYNSYM}
\alpha_t \circ \theta_g=  \theta_g\circ \alpha_t  \in \AUT
 \quad {\text{for all}} \ t\in \R,\ g\in G,
\end{equation}
then $(G, \theta)$  is called a dynamical symmetry group.
For  such  $(G, \theta)$, if there exists  a state 
  $\psi\in \setKMSb$  breaking the symmetry,  
 namely ${\theta_g}^{\ast} \psi\ne \psi$ for some $g\in G$,
then  the  dynamical symmetry group $(G, \theta)$ is  
  spontaneously broken.

\begin{remark}
\label{rem:BOGO}
The above definition of SSB  is based 
 on   the set of  equilibrium states $\setKMSb$ on the $\cstar$-system $\Al$.  
 There is another  
 definition of SSB  
  called  the Bogoliubov's method \cite{BOGO}.
The relationship between these  different definitions  of SSB
 for spin lattice systems 
has been elucidated in   III. 10 of \cite{SIM} and  Theorem 6.2.42 of 
\cite{BR}.  For   boson systems,  
 known  facts  are rather limited in comparison with  the 
 case of spin lattice 
 systems, see  \cite{WZ}.
\end{remark}

\section{Nonexistence of time-translation  symmetry breakdown}
\label{sec:NON}

\subsection{General no-go statements}
\label{subsec:GENERAL}

In  the preceding section,  
we  recalled the following basic fact 
 of  equilibrium states.
\begin{prop}
\label{prop:INV}
Any  equilibrium state on  $\cstar$-algebras 
  satisfying   the KMS condition 
is   invariant under the time evolution.
Each  macroscopic observable in the center of the von Neumann  algebra 
 generated by the GNS representation  
is fixed under the time evolution,  irrespective of whether 
such a  state 
 is a factorial state (pure phase) or a non-factorial state 
 (statistical mixture of different  phases).
\end{prop}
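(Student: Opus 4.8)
The plan is to read Proposition~\ref{prop:INV} as the packaging of the two facts already rehearsed in Section~\ref{subsubsec:TIMEINV}, and to prove it by assembling them: \emph{(i)} $\vp\circ\alpha_t=\vp$ for every $\vp\in\setEQU$, which is immediate from the KMS functional equation \eqref{eq:KMS-temporal}; and \emph{(ii)} the lift $\bar\alpha_t$ of $\alpha_t$ to the GNS von Neumann algebra $\vnv$ is the identity on the centre $\ZZ_\vp$, for which one must feed in a single structural result about $\vnv$. I would therefore proceed in the order: derive (i) by elementary complex analysis; construct the implementing group $\bar\alpha_t$ on $\vnv$; and then locate $\bar\alpha_t$ inside the modular apparatus of $(\vnv,\vp)$.

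For (i) with $0<\beta<\infty$: put $A=I$ in \eqref{eq:KMS-temporal}, so that $F_{I,B}(t)=\vp(\alpha_t(B))=F_{I,B}(t+i\beta)$ for all $t\in\R$, i.e.\ the two boundary traces of $F_{I,B}$ on $\partial\Dbeta$ agree. Since $|F_{I,B}|\le\|B\|$ on $\partial\Dbeta$, Phragm\'{e}n--Lindel\"{o}f for a strip gives $|F_{I,B}|\le\|B\|$ throughout $\Dbeta$ (here one uses that boundedness and continuity on $\Dbetac$ are part of the KMS definition). The matching of boundary traces lets the translated pieces $F_{I,B}(\,\cdot-ik\beta)$ ($k\in\Z$), defined on the horizontal strips $\{k\beta\le\Imag z\le(k+1)\beta\}$, be glued into a function on $\CC$ that is continuous everywhere and holomorphic off the lines $\Imag z\in\beta\Z$; by Morera it is entire, and being bounded it is constant by Liouville, so $\vp(\alpha_t(B))=F_{I,B}(0)=\vp(B)$, which is \eqref{eq:KMS-inv}. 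For $\beta=\infty$ the same kind of one-line argument works: using the equivalent ground-state characterisation $-i\vp(A^{\ast}\delta(A))\ge0$ for $A$ in the domain of the generator $\delta$ of $\{\alpha_t\}$ \cite{BR}, replacing $A$ by $I+\lam A$ and demanding the inequality for all $\lam\in\CC$ forces the term linear in $\lam$ to vanish, i.e.\ $\vp(\delta(A))=0$; hence $\tfrac{d}{dt}\vp(\alpha_t(A))=\vp(\delta(\alpha_t(A)))=0$.

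For (ii): once $\vp$ is $\alpha$-invariant, the GNS data $(\Hilvp,\pivp,\Omevp)$ carry a strongly continuous unitary group $U_{\vp}(t)$ with $U_{\vp}(t)\pivp(A)\Omevp=\pivp(\alpha_t(A))\Omevp$ and $U_{\vp}(t)\Omevp=\Omevp$, and $\bar\alpha_t:=\mathrm{Ad}\,U_{\vp}(t)$ is a $\sigma$-weakly continuous automorphism group of $\vnv=\pivp(\Al)^{\prime\prime}$ with $\pivp\circ\alpha_t=\bar\alpha_t\circ\pivp$, automatically stabilising $\ZZ_\vp$. If $0<\beta<\infty$, then $\Omevp$ is cyclic and separating for $\vnv$ (a standard feature of KMS states), so the vector state of $\Omevp$ is faithful and normal on $\vnv$ and satisfies the $(\bar\alpha,\beta)$-KMS condition; by uniqueness of the modular automorphism group (Tomita--Takesaki) $\bar\alpha_t$ equals, after the rescaling $t\mapsto-\beta t$, the modular group $\sigma^{\vp}$ of $(\vnv,\vp)$. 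Restricted to the abelian algebra $\ZZ_\vp$, $\sigma^{\vp}$ is a KMS automorphism group for $\vp|_{\ZZ_\vp}$, and the Liouville argument of the previous paragraph, now exploiting commutativity together with faithfulness of $\vp|_{\ZZ_\vp}$, forces $\sigma^{\vp}_t|_{\ZZ_\vp}=\mathrm{id}$; thus $\bar\alpha_t(Z)=Z$ for every $Z\in\ZZ_\vp$. If $\beta=\infty$, one invokes instead the fact \cite{ARbook}\cite{BR} that the positive Hamiltonian $H_{\vp}$ of a ground state is affiliated with $\vnv$, so that $U_{\vp}(t)\in\vnv$ and $\bar\alpha_t$ is trivial on the whole commutant $\vnvppri\supseteq\ZZ_\vp$.

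The step I expect to be the real obstacle is (ii): nothing in it can be made genuinely elementary. For finite $\beta$ it leans on Tomita--Takesaki modular theory --- uniqueness of the modular automorphism group and its triviality on the centre --- and for $\beta=\infty$ on the affiliation of the ground-state Hamiltonian with $\vnv$; both must be imported as black boxes from \cite{BR} (and \cite{ARmulti}, \cite{ARbook}). By contrast (i) is routine: the only care needed is to treat ``bounded and continuous on $\Dbetac$'' as part of the KMS definition so that Phragm\'{e}n--Lindel\"{o}f applies without an extra growth hypothesis, and to justify the gluing of the $i\beta$-translates by a removable-singularity (Morera) argument rather than by hand-waving. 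Since Section~\ref{sec:NON} is expository, I would present the $A=I$ / Liouville computation as the transparent heart of the matter --- it already shows that no equilibrium state, factor or not, can carry any time dependence --- and cite \cite{BR} for the identification of $\bar\alpha_t$ with the modular group.
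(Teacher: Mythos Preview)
The paper does not actually prove Proposition~\ref{prop:INV}: it is presented in Section~\ref{subsubsec:TIMEINV} and then restated in Section~\ref{subsec:GENERAL} as a known fact, with the time-invariance \eqref{eq:KMS-inv} asserted to ``follow from the KMS condition'' and the pointwise fixedness of $\ZZ_\vp$ (for finite $\beta$) and of $\vnvppri$ (for $\beta=\infty$) simply attributed to \cite{ARmulti}, \cite{BR}, \cite{ARbook}. Your sketch is correct and is precisely the standard argument contained in those references: the $A=I$/periodic-extension/Liouville computation for part~(i); identification of the GNS-extended dynamics $\bar\alpha_t$ with the modular group of $(\vnv,\vp)$ via Tomita--Takesaki, and hence its triviality on $\ZZ_\vp$, for finite $\beta$; and affiliation of the ground-state Hamiltonian with $\vnv$, giving $U_\vp(t)\in\vnv$ and triviality of $\bar\alpha_t$ on $\vnvppri\supseteq\ZZ_\vp$, for $\beta=\infty$. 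In that sense your proposal is faithful to the paper's (cited) approach --- you have merely unpacked what the paper leaves to the literature.

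One small wrinkle in your ground-state argument for (i): the substitution $A\mapsto I+\lam A$ ``for all $\lam\in\CC$'' is a bit loose, since $-i\vp\bigl((I+\lam A)^\ast\delta(I+\lam A)\bigr)$ need not be real for complex $\lam$, so the phrase ``forces the term linear in $\lam$ to vanish'' is not quite the right mechanism. Cleaner is to take $A=A^\ast$ in the domain and $\lam\in\R$: then $\delta(A)=\delta(A)^\ast$, so $\vp(\delta(A))\in\R$, the quadratic term $-i\lam^2\vp(A^\ast\delta(A))$ is real and nonnegative by hypothesis, and reality of the whole expression forces the purely imaginary linear term $-i\lam\,\vp(\delta(A))$ to vanish, whence $\vp(\delta(A))=0$; extend to general $A$ by linearity. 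Alternatively, since the paper already quotes \cite{BR} for the equivalent spectral characterisation (positive GNS Hamiltonian with $H_\vp\Omevp=0$), you may simply invoke that and read off invariance immediately.
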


\begin{remark}
\label{rem:MOVENOT}
The  frozen property of  equilibrium  states
  stated  in Proposition \ref{prop:INV} 
 may be   expressed as:  ``Any macroscopic order parameter  moves not!"
 This is  the  subtitle of this review.
 Here  macroscopic order parameters 
 are identified with  elements in  the center
 which is a classical (commutative) algebra.
Those are  usually given by the thermodynamic limit of densities.
A similar statement  was established  in  the work  \cite{BRUNOmove}
   for  the  particular   model  \cite{WIL}.
In \cite{NAKA},  a  non-equilibrium   quantum  time crystal model 
  similar to 
 \cite{WIL} was  developed
 by using a non-conventional definition 
of SSB.  This is a
 meta-stable state  
which does  not conflict  with the above   no-go statement.
\end{remark}

The impossibility of 
spontaneous  breakdown for time-translation symmetry  immediately 
follows from Proposition \ref{prop:INV} as follows.

\begin{theorem}
\label{thm:MAIN}
Suppose that a  quantum time evolution is given 
 by  $\cstar$-dynamics,  and equilibrium states 
 are given  by  the KMS condition with respect to the  quantum time evolution.
Then there exists  no  spontaneous 
breakdown of  time-translation symmetry, and 
 therefore,   no  temporal  order exists in  equilibrium states.  
In particular,  periodic, 
 quasi-periodic, and  chaotic orders  in the time direction 
 are  forbidden in  equilibrium states.
\end{theorem}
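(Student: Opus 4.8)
The plan is to derive Theorem~\ref{thm:MAIN} as an essentially immediate corollary of Proposition~\ref{prop:INV}, so the proof will be short and the ``work'' is really bookkeeping of definitions. First I would recall the definition of SSB of the dynamical symmetry group: a dynamical symmetry $(G,\theta)$ satisfying \eqref{eq:DYNSYM} is spontaneously broken if there exists $\psi\in\setKMSb$ with ${\theta_g}^{\ast}\psi\ne\psi$ for some $g\in G$. For time-translation symmetry one takes $G=\R$ and $\theta_t=\alpha_t$; this choice trivially satisfies the commutation condition \eqref{eq:DYNSYM} since $\alpha_t\circ\alpha_s=\alpha_{t+s}=\alpha_s\circ\alpha_t$, so $(\R,\alpha)$ is always a dynamical symmetry group with respect to itself. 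Hence SSB of time-translation symmetry would require some $\psi\in\setKMSb$ (or, at $\beta=\infty$, some ground state) with $\psi\circ\alpha_t\ne\psi$ for some $t$.

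The key step is then to invoke \eqref{eq:KMS-inv} (equivalently Proposition~\ref{prop:INV}): every $\vp\in\setEQU$ satisfies $\vp(\alpha_t(A))=\vp(A)$ for all $A\in\Al$ and all $t\in\R$, i.e. ${\alpha_t}^{\ast}\vp=\vp$. This directly contradicts the existence of a symmetry-breaking KMS state, so no SSB of time-translation symmetry can occur. For the ``no temporal order'' part I would argue that any putative order parameter distinguishing temporal phases is, by the discussion in Section~\ref{subsec:QUNANTUMSYETEM}, represented by an element of the center $\ZZ_\vp$ of $\vno$; Proposition~\ref{prop:INV} states that every such central element is pointwise fixed under the time evolution. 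Therefore the expectation value of any macroscopic observable in a KMS state is constant in $t$, ruling out periodic, quasi-periodic, droplet, or chaotic time-dependence of macroscopic order parameters. One should note explicitly that this holds whether $\vp$ is a factor state or a genuine mixture, by Remark~\ref{rem:MULTI}, so even a ``spontaneously chosen'' pure phase $\vp_\lam$ appearing in the decomposition \eqref{eq:KMSfactor-dec} is itself time-invariant.

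I do not anticipate a genuine mathematical obstacle here, since all the analytic content (analyticity of $F_{A,B}$, the simplex structure, invariance of the center) has already been imported from \cite{BR} in Section~\ref{subsubsec:TIMEINV}. The only point requiring a little care is conceptual rather than technical: making precise what ``temporal order'' means so that the negative statement is non-vacuous. I would handle this by identifying ``temporal long-range order'' with non-constancy of $t\mapsto\vp(\alpha_t(Z))$ for $Z$ a macroscopic (central) observable, or more generally with non-constancy of two-point temporal correlations restricted to the center, and then observe that time-invariance of $\vp$ together with $\alpha_t$-invariance of $\ZZ_\vp$ forces all such quantities to be $t$-independent. This is exactly the content of Corollary~\ref{coro:NoLRO} referred to in the introduction, so the final line of the proof can simply point forward to that corollary for the long-range-order formulation while Theorem~\ref{thm:MAIN} itself is closed by the contradiction with \eqref{eq:KMS-inv}.
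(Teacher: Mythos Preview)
Your proposal is correct and follows essentially the same route as the paper: identify $(\R,\alpha)$ as a dynamical symmetry group via the trivial commutation $\alpha_t\circ\alpha_s=\alpha_{t+s}$, then invoke Proposition~\ref{prop:INV} (equivalently \eqref{eq:KMS-inv}) to conclude that every equilibrium state is $\alpha_t$-invariant, precluding SSB, with the pointwise invariance of $\ZZ_\vp$ handling the ``no temporal order'' clause. The paper presents no additional argument beyond this, so your bookkeeping of definitions matches it.
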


In the above theorem,   the  one-parameter group of automorphisms 
 $\{\alpha_t\in\AUT,\;t\in \R\}$ 
 plays   two roles. First, it denotes  
 a Heisenberg  quantum time evolution  determining  
 the quantum model under consideration.  Second, it provides   
 a special  dynamical symmetry $(G, \theta)=(\R, \alpha)$ of  the model, 
 as the requirement \eqref{eq:DYNSYM}
 is satisfied by  the  following obvious commutative  relation:
\begin{equation*}
\alpha_t \circ \alpha_s=  \alpha_s \circ \alpha_t =\alpha_{t+s} \in \AUT
 \quad {\text{for all}} \ t\in \R,\ s\in \R.
\end{equation*}

\begin{remark}
\label{rem:trace}
Although it is unnecessary, 
  we shall take a look at the  $\beta=0$ case (the 
infinite temperature) 
   to compare  
 Theorem \ref{thm:MAIN}   with 
 \cite{wok} that provides 
  special treatment  for this case  in II. C ``Proof for $\beta=0$''. 
The KMS condition for $\beta=0$ yields  
 the identity  $\vp (A B)=\vp ( BA)$
 for all $A, B\in \Al$, so $\vp$ is  identical 
 to the tracial state $\tr$ on $\Al$.  
The tracial state is a factor state (if $\Al$ is 
 a simple algebra as usual in quantum statistical mechanics). 
Hence,   it  
satisfies   
the uniform cluster property with respect the 
 space-translations. 
 As the tracial state  is  invariant under any automorphism, 
it is automatically invariant  under 
$\{\alpha_t\in\AUT,\;t\in \R\}$. 
Hence the possibility of quantum time crystal is obviously negated. 
\end{remark}

\begin{remark} 
\label{rem:GRAND}
 KMS states for    correspond to 
canonical ensembles  determined by the inverse temperature $\beta\in \R_{+}$.
In a similar  manner, one can  consider  the 
KMS condition corresponding to   grand canonical ensembles 
 specified by  $\beta$
 and the chemical potential(s) $\mu$, 
  see \cite{HHW} and  Sec. 5.4.3 of \cite{BR}.
For grand 
 canonical KMS states as well, 
we can obtain    statements similar  to     
Proposition \ref{prop:INV} 
  Theorem \ref{thm:MAIN} and  Corollary \ref{coro:NoLRO}.  
If U(1)-symmetry  generated by the number operator 
breaks spontaneously such  as  the Bose-Einstein  condensation, then 
 the expectation value of the field operator which  is responsible 
 for the U(1)-symmetry  breakdown will oscillate  
periodically in time  as  noted in \cite{VOL} \cite{wok}. 
However,   the time invariance property  \eqref{eq:KMS-inv}
 holds   for all gauge invariant observables, i.e.  elements  in  $\Al$
 that are fixed  by the U(1)-transformation.
\end{remark}

\subsection{Absence of   temporal long-range order}
\label{subsec:LROcstar}
Proposition \ref{prop:INV} and Theorem \ref{thm:MAIN}   require 
 no  particular assumption on  the spatial structure.
In this subsection,  we specialize in the homogeneous case, where 
 the  states and  the quantum time evolution under consideration  
are both assumed to be   homogeneous (translation invariant or 
  spatially  periodic).

\subsubsection{Densities as macroscopic observables}
\label{subsubsec:densities}
 Each   local observable  gives  a macroscopic observable 
 on the GNS Hilbert space for any 
homogeneous (translation invariant or 
  spatially  periodic) state  by  its Ces\` aro sum  under 
  space translations.
Let us  give this   statement a precise  
  formula  below following \cite{HAAGBCS} \cite{BR}.

We  prepare  some  notations regarding 
infinite-volume  limits. 
Let  $\Gamma$  denote our  infinite space, 
 and  $\Delta$ denote  a  crystallographic  subgroup of $\Gamma$ as before.
Suppose that  a net $\{\Lam;\; \Lam \Subset \Gamma\}$ of finite subsets 
 of $\Gamma$  eventually includes any  $\I \Subset \Gamma$. 
Then,  we denote   
$\netLamtoG$ or  simply $\netLamG$.
Similarly,  if  a  net $\{\Lam;\; \Lam \Subset \Delta\}$
  eventually includes any  $\I \Subset \Delta$, then  we denote
$\netLamtoD$  or   simply $\netLamD$.

Take any $A\in \Al$. 
For  any  finite subset $\Lam\Subset \Gamma$ (or $\Lam\Subset \Delta$)
we take  the following  averaged sum:
\begin{equation}
\label{eq:mLamA}
\mLamA:= 
 \frac{1}{|\Lam|} \sum_{x\in\Lam} 
\tau_{x}(A)\in \Al.
\end{equation}
For $\ome\in \State$ 
let $\omemLamA:= \piome 
\left(\mLamA \right) \in \vnome$.
 For any  $\ome \in \TRASTATE$
 and  any $\netLamtoG$, 
the   net of uniformly bounded operators  
 $\bigl\{\omemLamA\in \vnome;\; \limG  \bigr\}$ 
has  at least one and more   accumulation point(s) 
 in the center $\ZZ_{\ome}$. Precisely,  
 there exists    a subnet  
 that converges to some element  
of  the center  in the weak-operator topology.
We denote  any such  accumulation point   by $\hatomeAinf$.
Heuristically, we write  
\begin{equation}
\label{eq:mAacc}
\hatomeAinf\equiv 
  \lim_{\sublimG}\omembLamA \in \ZZ_{\ome},
\end{equation}
where 
$\subnetLamtoG$ is some  subnet of  $\netLamtoG$.  
Any  such $\hatomeAinf$ belongs to the center
 $\ZZ_{\ome}$ and is called  a
macroscopic observable (thermodynamic density, or observable 
at infinity \cite{LAN-RUE}).
When  $\ome$ is a  non-factor state, 
 there can be  multiple accumulation points. 
 Similarly, we consider  $\ome \in \perSTATE$. 
 Then every  accumulation point of 
$\bigl\{\omemLamA\in \vnome; \; \limD \bigr\}$ 
in the weak-operator topology  belongs to $\ZZ_{\ome}$.
If  $\ome \in \HOMOSTATE$ is  a factor state, 
these macroscopic observables  
are  sharply given as the scalars $\ome(A){\rm{I}}$  with no dispersion 
 for   $A\in\Al$.

\subsubsection{Long-range order in the $\cstar$-algebraic formulation}
\label{subsubsec:LROCstar}
We give  a general   formulation of   
 long-range order (LRO) in the  $\cstar$-algebraic formulation.
Take  $\ome \in \HOMOSTATE$ and  $A, B\in \Al$.  
Denote their  densities by  $\hatomeAinf, \hatomeBinf\in \ZZ_{\ome}$
 as in \eqref{eq:mAacc}.
Consider  the following two-point correlation function 
with respect to $\ome$ 
\begin{equation}
\label{eq:coromeAB}
\coromeAB \equiv 
\left(\Omeome,\; \hatomeAinf
\hatomeBinf \Omeome\right),
\end{equation}
 where the  GNS representation of $\ome$ is used. 
If   $\coromeAB$ is non-zero, 
then $\ome$ is said to exhibit  LRO for  $A, B\in \Al$.
If   $\coromeAA$ is non-zero for some 
 $A\in \Al$, then $\ome$  is said to exhibit  LRO 
and  $A\in\Al$ is called  a local  order parameter.

Now we  consider the group action $(G, \theta)$. 
For  $A\in\Al$ and $g\in G$, we 
define
\begin{equation}
\label{eq:gmacro}
\hatomeAinf(g):=\hatomegAinf\in \ZZ_{\ome}, 
\end{equation}
 by substituting  
$\theta_g(A)$  for   $A$ in \eqref{eq:mAacc}. 
 We consider   the following  two-point correlation function 
with respect to  $\ome \in \HOMOSTATE$:
\begin{equation}
\label{eq:GcoromeAB}
\coromeAB(g)\equiv 
\left(\Omeome,\; \hatomeAinf(g)
\hatomeBinf \Omeome\right),\  g\in G.
\end{equation}
If $\coromeAB(g)$ is  a non-constant function of $g\in G$, 
then $\ome$ is said to exhibit   $G$-dependent  LRO for 
 $A, B\in\Al$.
If $\coromeAA(g)$ is  a non-constant function of $g\in G$
 for some  $A\in \Al$, 
then $\ome$ is said to exhibit   $G$-dependent  LRO, 
 and  $A\in\Al$ is called  
 a local order parameter  with respect to  $(G, \theta)$-symmetry.

Next,  we consider  the quantum time evolution.  
Let $\vp\in  \setEQU\cap \HOMOSTATE$, i.e. an arbitrary   homogeneous 
equilibrium state for $\{\alpha_t\in\AUT,\;t\in \R\}$.  
 Substituting $\vp$ for $\ome\in \HOMOSTATE$, 
$(\R, \alpha)$ for  $(G, \theta)$, and $t\in \R$ for   
 $g\in G$  of  $\hatomeAinf(g)$ defined above, 
we obtain a  macroscopic observable    
\begin{equation*}
\hatvpAinf(t)\in \ZZ_{\vp}. 
\end{equation*}
Let  $A, B\in \Al$. 
Consider   the following two-point temporal correlation function 
with respect to $\vp$:
\begin{equation}
\label{eq:corvpAB}
\corvpAB(t)\equiv 
\left(\Omevp,\; \hatvpAinfKt
\hatvpBinf \Omevp\right),\  t\in \R.
\end{equation}
By the pointwise invariance of the center $\ZZ_{\vp}$ under 
 the time evolution $\{\alpha_t\in\AUT,\;t\in \R\}$ as 
 stated  in Proposition \ref{prop:INV},   
\begin{equation}
\label{eq:time-KMSfix}
\hatvpAinfKt=\hatvpAinf, \quad \forall t\in \R.
\end{equation}
It implies  the  fixed temporal correlation function:
\begin{equation}
\label{eq:tempinv}
\corvpAB(t)=
\corvpAB(0)\quad \text{for all}\  t\in \R.
\end{equation}
Thus, the  absence of 
 non-trivial temporal  LRO is  proved.
\begin{corollary}
\label{coro:NoLRO}
Assume  the same  assumption of  Theorem \ref{thm:MAIN}. 
Assume further that 
the  time evolution is  homogeneous in space.
Then,  there exists   no non-trivial temporal LRO
 for any   homogeneous equilibrium state. 
\end{corollary}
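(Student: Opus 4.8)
The plan is to prove that the two‑point temporal function $\corvpAB(t)$ of \eqref{eq:corvpAB} does not depend on $t$; since, by definition, $\vp$ exhibits $G$‑dependent LRO with $(G,\theta)=(\R,\alpha)$ precisely when $\coromeAB(g)$ is non‑constant, this is exactly the claim that no homogeneous equilibrium state $\vp\in\setEQU\cap\HOMOSTATE$ exhibits non‑trivial temporal LRO, and taking $B=A$ also removes every local order parameter with respect to time translations. Because $\corvpAB(t)=\left(\Omevp,\;\hatvpAinfKt\,\hatvpBinf\,\Omevp\right)$ with $\hatvpBinf$ carrying no $t$‑dependence, the whole statement reduces to \eqref{eq:time-KMSfix}, that the macroscopic density built from $\alpha_t(A)$ equals the density $\hatvpAinf$ built from $A$.

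I would establish \eqref{eq:time-KMSfix} in four steps. (i) The hypothesis that the time evolution is homogeneous in space gives $\tau_x\circ\alpha_t=\alpha_t\circ\tau_x$ for every $t$ and every $x$ in the translation group used to form $\mLamA$ — all of $\Gamma$ if $\vp$ is translation invariant, all of $\Delta$ if $\vp$ is $\Delta$‑periodic — whence $\mLamaltA=\alpha_t(\mLamA)$ for every $\Lam$ in the net. (ii) Every equilibrium state is $\alpha_t$‑invariant by Proposition \ref{prop:INV} (see \eqref{eq:KMS-inv}), so the GNS triple of $\vp$ carries a strongly continuous unitary group $\{U_\vp(t)\}_{t\in\R}$ on $\Hilvp$ with $U_\vp(t)\pivp(C)U_\vp(t)^{\ast}=\pivp(\alpha_t(C))$ for $C\in\Al$ and $U_\vp(t)\Omevp=\Omevp$; combined with (i) this yields $\pivp(\mLamaltA)=U_\vp(t)\,\pivp(\mLamA)\,U_\vp(t)^{\ast}$. (iii) Conjugation by the fixed unitary $U_\vp(t)$ is weak‑operator continuous, so it induces a bijection between the weak‑operator accumulation points of the net $\{\pivp(\mLamaltA)\}$ and those of $\{\pivp(\mLamA)\}$; realizing $\hatvpAinf$ as the limit along a fixed subnet, $U_\vp(t)\,\hatvpAinf\,U_\vp(t)^{\ast}$ is then a legitimate choice of $\hatvpAinfKt$. (iv) Since $U_\vp(t)$ implements the automorphism $\alpha_t$ of $\Al$ it normalizes $\vnv$ and hence the center $\ZZ_{\vp}$, so both sets of accumulation points lie in $\ZZ_{\vp}$; and $\mathrm{Ad}\,U_\vp(t)$ restricted to $\vnv$ is the canonical extension of the time evolution, which by Proposition \ref{prop:INV} fixes $\ZZ_{\vp}$ pointwise. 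Hence the two sets of accumulation points coincide, so $\hatvpAinfKt=\hatvpAinf$ — equation \eqref{eq:time-KMSfix}. Substituting it into \eqref{eq:corvpAB} gives $\corvpAB(t)=\left(\Omevp,\;\hatvpAinf\,\hatvpBinf\,\Omevp\right)=\corvpAB(0)$ for all $t$, i.e. \eqref{eq:tempinv}, and the same for $\corvpAA$; this is the assertion of the corollary.

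The step I expect to require care is the passage (iii)–(iv): one must not interchange the infinite‑volume limit with the automorphism $\alpha_t$, but instead keep $U_\vp(t)$ fixed, transport a chosen subnet by it, and invoke weak‑operator continuity of $X\mapsto U_\vp(t)XU_\vp(t)^{\ast}$ together with the pointwise invariance of $\ZZ_{\vp}$ from Proposition \ref{prop:INV}. I would also note, in the spirit of Remark \ref{rem:KMS-mukankei}, that nothing here uses factoriality of $\vp$ or any ergodicity, mixing, or thermalization hypothesis: for a non‑factorial $\vp$ the density $\hatvpAinf$ may have nonzero dispersion in the state $\Omevp$, yet it is still frozen under the time evolution, and that alone makes $\corvpAB(\cdot)$ — and every temporal correlation among macroscopic observables — constant.
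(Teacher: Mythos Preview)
Your proposal is correct and follows essentially the same route as the paper: the paper derives \eqref{eq:time-KMSfix} in one line by invoking the pointwise invariance of $\ZZ_{\vp}$ from Proposition~\ref{prop:INV}, and then \eqref{eq:tempinv} follows immediately. Your steps (i)--(iv) simply unpack that one line --- making explicit the use of spatial homogeneity to identify $\mLamaltA$ with $\alpha_t(\mLamA)$, and then the GNS unitary $U_\vp(t)$ to carry weak-operator accumulation points to accumulation points before applying the center-fixing property --- which is exactly the justification the paper leaves implicit.
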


\begin{remark} 
\label{rem:NONLOCAL}
We have defined 
$\hatomeAinf\in \ZZ_{\ome}$
 for all $A\in \Al$ in 
 \eqref{eq:mAacc} which  are not necessarily strictly 
 local. 
This is essential for 
 $\corvpAB(t)$  in \eqref{eq:corvpAB} to be well defined, 
 since generically the 
 time development $\alpha_t(A)$ of $A\in \Alloc$   
does not  stay   in $\Alloc$. 
\end{remark}

\begin{remark} 
\label{rem:CORO}
 Corollary \ref{coro:NoLRO}  has  the following obvious  generalizations.
Let  $\vp\in  \setEQU\cap \HOMOSTATE$
 and  $A, B\in \Al$ as in Corollary \ref{coro:NoLRO}.  
 Consider 
\begin{align*}
\left(\Omevp,\; \hatvpAinfKt
\pivp(B) \Omevp\right),\ t\in \R,\\ 
\left(\Omevp,\; \pivp\left(\alpha_t(A)\right) 
\hatvpBinf
\Omevp\right),\  t\in \R. 
\end{align*}
Then 
  from  \eqref{eq:time-KMSfix} and the time-invariance of $\vp$
it follows that these 
temporal two-point 
 correlation functions are  constant  with respect to $t\in\R$.
 On the other hand, generically, the  two-point function   
$\left(\pivp\left(
\alpha_t(A)\right) 
\pivp(B) \Omevp\right)$ 
is  not   constant in  $t\in\R$
 as noted  in \cite{WO}.
\end{remark}

\section{Comparison}
\label{sec:COMPARISON}
In this section,  we  compare our 
no-go statements of 
genuine quantum time crystals
  given in   Section \ref{sec:NON}
with  the previous works  by   
 Watanabe-Oshikawa-Koma which will be summarized 
 in the  first subsection.
 
\subsection{Summary of  the result of  Watanabe-Oshikawa-Koma}
\label{subsec:WOK}
We  recall   the formulation and the  main result   
 of   \cite{WO, wok}  
 adding some  minor   modifications for comparison purposes. 
 We    always explicitly write  the  $\Lam$-dependence of subsystems
 as $\All$, as we consider such  specification is crucial. 
 Furthermore,  any subsystem  $\All$  is embedded 
 into the  total system  $\Al$.  
We shall take  
 the cubic lattice $\Zmu$ as in \cite{wok}. 
Define the metric on $\Gamma\equiv \Zmu$ by 
$\Vert x-y \Vert:=\max_{ i \in \{1,2, \cdots, \mu\}}|x_i-y_i|$
 for $x=(x_i),\; y=(x_i)\in \Gamma$.
 Let  ${\rm{diam}}(\Lam)\equiv \{\sup{\Vert x-y\Vert ; \; x,y\in \Lam}\}$
 for $\Lam\subset \Gamma$.

Suppose  that the Hamiltonian on the total system  
 is formally  given by 
\begin{equation}
\label{eq:haH}
\hatH:=\sum_{x\in \Gamma} \hx,
\end{equation}
 where  each local Hamiltonian $\hx\in \Alloc$ is a finite-range  
self-adjoint operator with  its 
 support ${\rm{supp}}(\hx)$  centered at  site $x\in \Gamma$. 
We assume that
the range and the norm of $\{\hx; \; x\in \Gamma\}$
 are uniformly bounded over $x\in \Gamma$:
\begin{align}
\label{eq:Hxuniform}
{\rm{diam}}({\rm{supp}}(\hx))\le R_h,\quad  
 \Vert  \hx  \Vert\le  N_h,  
\end{align}
where $R_h$ and $N_h$
 are some  positive constants.

For each  $\Lam\Subset \Gamma$, let 
\begin{equation}
\label{eq:Hlam}
\Hlam:=\sum_{x\in \Lam} \hx \in \Allplus,
\end{equation}
where  $\parext \Lam$ denotes some outer surface  region 
surrounding but   not intersecting $\Lam$.
 By  \eqref{eq:Hxuniform} the ration $\frac{|\parext\Lam|}{|\Lam|}$ 
tends to $0$  as $\LamtoG$. 
One may   choose  the free-boundary local Hamiltonian  
$\Hlamfree:=\sum_{x\in \Lam \setminus \parinside} \hx \in \All$,
 where $\parinside$ is the  smallest subset within $\Lam$
 such that the above sum is in $\All$.
In the following, we  use   the above local Hamiltonian 
 \eqref{eq:Hlam} as in  \cite{wok}. 

For $\Lam\Subset \Gamma$, we define 
the local Heisenberg time evolution 
 by 
\begin{equation}
\label{eq:HeisenLam}
\altLam(A):=
e^{it  \Hlam} A e^{-i t \Hlam}
\ \text{for }\  A\in  \Al, \quad t\in \R.
\end{equation}
For $\Lam\Subset \Gamma$, we  
define the local Gibbs state $\rhoGlam\in \State$
 at inverse temperature $\beta$ by
   the same local Hamiltonian  $\Hlam$
as 
\begin{equation}
\label{eq:rhoGlam}
\rhoGlam(A):=
\frac{1}{\tr(e^{-\beta \Hlam})} \tr(e^{-\beta \Hlam}A)
 \quad  \text{for }\  A\in  \Al,
\end{equation}
 where $\tr$ denotes  the tracial state on $\Al$.  
Note that the  local Gibbs state 
$\rhoGlam$  defined  on the total system $\Al$ 
 is the  unique $\beta$-KMS state for   
 $\{\altLam\in\AUT,\;t\in \R\}$.
Similarly,  let  
$\vaclam\in  \State$ denote a  ground state for  
the  local Hamiltonian $\Hlam$, equivalently, 
 a ground state  for   
 $\{\altLam\in\AUT,\;t\in \R\}$ as defined in  
Section~\ref{subsubsec:KMS}.
Note that such ground state surely exists 
 $\vaclam\in  \State$ if 
 the  local Hamiltonian $\Hlam$ is a bounded element;
 this is always the case for quantum spin lattice systems.
It is often the case  that there  non-unique  
 ground states.  

Take a set of local operators      
 $\{A_x; \; x\in \Gamma\}$, where each  $A_x$ is a local operator  
with its  support  ${\rm{supp}}(A_x)$
   centered at   $x\in \Gamma$.
Assume  that  
the range and the norm for $\{A_x; \; x\in \Gamma\}$
 are uniformly bounded over $x\in \Gamma$:  
there are  positive constants $r$ and  $a$ such that 
 for all  $x\in \Gamma$  
\begin{align}
\label{eq:Axuniform}
{\rm{diam}}({\rm{supp}}(A_x))\le r,\quad  
 \Vert  A_x \Vert\le  a.  
\end{align}
Then for  $\Lam\Subset \Gamma$, we define  
\begin{align}
\label{eq:hatAlam}
\hatAlam \equiv \mLamAset:= 
 \frac{1}{|\Lam|} \sum_{x\in\Lam} 
A_x \in \Alloc.
\end{align}
 The  notation $\hatAlam$ above  corresponds  to  
 $\hatA$   in  \cite{wok}. 
If 
 $A_0$ is a local operator  
with its  support   centered at  the origin and 
$A_x=\tau_x(A_0)$ for all $x\in\Gamma$, 
 then such $\{A_x; \; x\in \Gamma\}$ is said to be  
  covariant in space-translations, and 
$\hatAlam$  is equal   to  $\mLamA$ of 
 \eqref{eq:mLamA} with  $A:=A_0\in \Al$.

Consider  $\{A_x; \; x\in \Gamma\}$
  and $\{B_x; \; x\in \Gamma\}$, both 
 satisfying  \eqref{eq:Axuniform}, and then take  their    
 $\hatAlam$ and $\hatBlam$ as in \eqref{eq:hatAlam} 
for each $\Lam \Subset \Gamma$. 
For each $\Lam \Subset \Gamma$,  define   the following temporal two-point 
 correlation function for the local Gibbs state \eqref{eq:rhoGlam}
 under the local Heisenberg time evolution \eqref{eq:HeisenLam}:
\begin{equation}
\label{eq:WOKtemp}
\corlocGibbsAB(t)\equiv 
\rhoGlam\left(\altLam\bigl( \hatAlam \bigr) \hatBlam  \right), \quad t\in \R. 
\end{equation}
Similarly, for the case of $\beta=\infty$,  let
\begin{equation}
\label{eq:WOKground}
\corlocvacAB(t)\equiv 
\vaclam \left(\altLam\bigl( \hatAlam \bigr) \hatBlam  \right), \quad t\in \R. 
\end{equation}

The above  functions will be  called  WOK temporal  correlation functions. 
The main statement  of  \cite{WO, wok} is  as follows.
For each fixed  $t\in \R$, 
\begin{equation}
\label{eq:WOK-temp-nai}
\lim_{\LamtoG} \left|\corlocGibbsAB(t)-\corlocGibbsAB(0)\right|=0\quad \text{for any } \ \beta\in \R_{+}, 
\end{equation}
and 
\begin{equation}
\label{eq:WOK-gr-nai}
\lim_{\LamtoG} \left|\corlocvacAB(t)-\corlocvacAB(0)\right|=0
\end{equation}
 are satisfied. 
By the  triviality  of  the   
  (Griffiths-type) LRO with respect to  $t\in \R$ 
as in   \eqref{eq:WOK-temp-nai} \eqref{eq:WOK-gr-nai}, 
Watanabe-Oshikawa-Koma concluded  
 ``absence of quantum time crystals for equilibrium states''.

\begin{remark}
\label{rem:limitWOK}
Equation~\eqref{eq:WOK-temp-nai}  
does
 not assert  the identity  $\lim_{\LamtoG} \corlocGibbsAB(t)
=\lim_{\LamtoG}\corlocGibbsAB(0)$.
The existence of this limit  is not known.  
\end{remark}

\subsection{On different formulations  of LRO}
\label{subsec:differentLRO}
 There are variant   formulations of 
LRO (long-range order). 
The well-known definition  
based on the box procedure-method is due to  Griffiths  \cite{GRIF}. 
Precisely,  it  is   formulated  by 
 a net  of  local Gibbs states (under some boundary condition)
 together with  averaged local observables.
 Another  definition of   LRO  
 is  defined in terms of   states on   a quasi-local $\cstar$-algebra 
and macroscopic observables (which belong to the von Neumann 
 algebra not in the given $\cstar$-algebra), we refer to  \cite{SEW2014}.
Let us call the former    
 the Griffiths-type  LRO,    
and the latter   
the  $\cstar$-algebraic   LRO. 
 The Griffiths-type  LRO
  has produced  remarkable  results  on   
 several  statistical-physics models \cite{GRIF} \cite{DLS}, 
 whereas the  $\cstar$-algebraic   LRO
  is  a   mathematical formula
 which is useful  for  general discussion   
but  not for  practical analysis of concrete models.   
Corollary \ref{coro:NoLRO}   
is  based on the $\cstar$-algebraic   LRO, 
 whereas  
 the  work  \cite{WO, wok} relies  
 on the  Griffiths-type LRO as we have seen  in Section~\ref{subsec:WOK}.   
 Before going into the in-depth discussion, let us  recall   
general information on 
 these  two   different  LROs in terms of  SSB:

\begin{itemize}   
\item  
The existence of non-trivial  $\cstar$-algebraic LRO 
is  \emph{equivalent}
to the existence of  multiple phases. 
Here,  the  states      
 are  assumed to be homogeneous states on the quasi-local $\cstar$-algebra 
$\Al$, but  not necessarily equilibrium states. 
See $\S$5.2 of \cite{SEW2014} for  this equivalence relation. 
 
\item 
Non-trivial  Griffiths-type LRO   appeared in a  (classical or quantum) 
 spin lattice model implies 
a corresponding spontaneous  symmetry breakdown.
 For the   precise statement,   see   
 Theorem 1.3 of \cite{DLS}, \cite{KT-LONGRANGEORDER}, \cite{KT-Finite}, 
 and   $\S$5.5 of \cite{SEW2014},  Sec.III.10 of \cite{SIM}.
 On the other hand, the converse implication 
 is not known in general (even for classical lattice models). 
From the state of the art of mathematical rigorous statistical mechanics, 
  there are few  non-trivial cases for which 
 the converse implication is justified. 
\end{itemize}   

Watanabe-Oshikawa-Koma 
defined 
 a Griffiths-type LRO 
 with the time parameter by  
$\lim_{\LamtoG} \corlocGibbsAB(t)$
 ($t\in \R$).
 And they   postulated  that 
 the  
 periodicity of this quantity  in  time 
identifies with emergent  quantum time crystal.
However, 
 in view of  the  general status of Griffiths-type LRO mentioned above, 
  the absence of 
 time translation symmetry breakdown for equilibrium states  
 cannot be concluded solely by   the triviality of 
 the Griffiths-type LRO  \eqref{eq:WOK-temp-nai} \eqref{eq:WOK-gr-nai}.
As far as we understand,   
 the  essential idea of    \cite{WO, wok}
is owing   to  the   method  of  
finding  SSB  in  quantum spin lattice models  given   in   \cite{KT-Finite}.
However, it is  not certain  whether 
 non-detection of SSB  by this   specific method   
   yields   a  {\emph{complete}}  proof of the   absence of SSB.
On the other hand, the KMS condition  
 completely excludes temporal SSB.

\subsection{On limit procedure}
\label{subsec:Limit}
 Our  $\cstar$-algebraic formulation    and  the  
works \cite{WO, wok}
are very different  in  the treatment of the infinite-volume-limit.
We now  introduce some   notaions 
  concerning   the  infinite-volume limit
 and  recall some  related  facts.

  Let   $\rhoGinf$  denote  
 an arbitrary accumulation point 
of the net of 
 local Gibbs states $\{\rhoGlam; \;  \Lam\Subset \Gamma\}$
 \eqref{eq:rhoGlam}. 
 Heuristically, we write   
\begin{align}
\label{eq:INFINITE-GIBBS}
\rhoGinf(A)=
\lim_{\LamtoG}
\rhoGlam(A), \quad A\in \Al.
\end{align}
Any  accumulation point $\rhoGinf$ is  called a limiting Gibbs state.
Such $\rhoGinf$ is not necessarily unique.
Let $\setACCUM$ denote the set of all such $\rhoGinf$. 
The quantum time evolution  
  $\{\alpha_t\in\AUT,\;t\in \R\}$  is 
 called   approximately inner if 
\begin{equation}
\label{eq:APP}
\alpha_t(A)=\lim_{\LamtoG}
\altLam(A) 
 \quad  \text{for each}\  A\in \Al
  \ \text{and }\  t\in \R,
\end{equation}
where 
$\altLam$  denites 
 the local Heisenberg time evolution \eqref{eq:HeisenLam}, 
 and 
the convergence  is with respect to the  norm 
(or $\sigma$-weak topology introduced by  the  GNS representation of a chosen 
 state).
The  existence of 
 at least one and more  limiting    Gibbs states $\rhoGinf$ 
   as in \eqref{eq:INFINITE-GIBBS}
and the existence   of a unique strongly continuous 
 approximately inner $\{\alpha_t\in\AUT,\;t\in \R\}$ 
 as in  \eqref{eq:APP}
   have   been  verified 
 for any   short-range  quantum spin lattice model, 
see \cite{ROB68} \cite{POWSAK}, and  Theorem 6.2.4 of \cite {BR}. 
Also, it has been  known    that under the same assumption 
  every  limiting Gibbs state  $\rhoGinf$
  satisfies the   KMS condition   with respect to
  $\{\altLam\in\AUT,\;t\in \R\}$ for $\beta$.
Thus,  the following inclusion holds: 
\begin{equation}
\label{eq:ACCUsmall}
\setACCUM \subset \setKMSb
\end{equation}

\subsubsection{How to define  LRO in the infinite volume limit?}
\label{subsubsec:HowtoLRO}
The WOK temporal  correlation function
$\corlocGibbsAB(t)$
defined  in \eqref{eq:WOKtemp}
  has  the  same $\Lam$-dependence on   
 the local Gibbs  state $\rhoGlam$, the 
cut-off time-translation symmetry
  $\altLam$,  and  the   local order parameters 
$\hatAlam$ and  $\hatBlam$.
These three $\Lam$s  are taken  to  infinity  
 at the same time. As  noted in   Remark \ref{rem:limitWOK}, however,  
the existence of  $\lim_{\LamtoG}\corlocGibbsAB(t)$
  has not been  verified. On the other hand, 
if  the three  limits are taken in the  specified order
as follows,   then    a   $\cstar$-algebraic LRO (whose 
 existence is surely verified for  short-range 
  quantum spin-lattice models) 
 will appear: 
\begin{equation}
\label{eq:tripleLIMITs}
\lim_{\LamiiitoG}
\lim_{\LamiitoG}
\lim_{\LamitoG}
\rhoGlami\left(\altLamii\bigl( \hatAlamiii \bigr) \hatBlamiii  \right)
=\corweakABzero(t), 
\end{equation}
where the right-hand side  is given by the formula 
 $\corvpAB(t)$
\eqref{eq:corvpAB} 
 in  Section~\ref{subsubsec:LROCstar}
   with  the approximately inner time evolution $\{\alpha_t\in\AUT,\;t\in \R\}$ for the chosen $\cstar$-dynamics, and   
 $A=A_0$, $B=B_0$,  $\vp=\rhoGinf$.

\subsubsection{How to formulate   quantum time evolutions?}
\label{subsubsec:time}

We now discuss a crucial 
 problem of how to formulate quantum time evolutions.  
 It  appears   directly relevant  to physics;
 it is  not merely a matter of mathematical rigor. 
In  \cite{WO, wok}, 
    the local Gibbs state and  the local time-translation symmetry  
 are given  by  the 
 {\emph{same}} local Hamiltonian $\Hlam$. 
This assumption   
 is essential for the proof of
  \eqref{eq:WOK-temp-nai} \eqref{eq:WOK-gr-nai}.
However, to  establish non-existence of something completely, 
one should take   (infinitely many)  possibilities into account.   
 Different choices  of local Hamiltonians  
for a local Gibbs state and a local time-translation 
 on the same $\Lam$  may be possible; 
 there is no  reason to exclude them.
Furthermore,   the  same-local-Hamiltonian prescription 
 used in \cite{WO, wok} needs  justification,  
 because  the  true  quantum time evolution instantly evolves   
 local observables  to nonlocal ones, whereas 
 the cut-off time evolution on $\Lam$ used in \cite{WO, wok}  unnaturally 
  confines   local observables of  
  $\All$ in its slightly larger subsystem $\Allplus$ eternally   
  not allowing   them     to escape from the given  region.
In the following, we   estimate   the  difference between 
  the cut-off 
 and  infinite-volume time evolutions 
 for   short-range  quantum spin lattice models.  

\begin{prop}
\label{lem:new}
Let $\Al$ denote a quantum spin system  on the lattice $\Zmu$.
Suppose  that 
the time evolution 
 $\{\alpha_t\in\AUT,\;t\in \R\}$ is translation invariant, strongly continuous,  and  approximately inner.
Let 
$\{\altLam\in\AUT,\;t\in \R\}$ denote 
  the local Heisenberg time evolution 
 for $\Lam\Subset\Gamma$ as given in \eqref{eq:HeisenLam}.
 Let       
 $\{A_x=\tau_x(A_0); \; x\in \Gamma\}$, 
where $A_0$ is a local operator with its  support   centered at  
 the origin of $\Zmu$.
Let  $\varepsilon>0$ and $t_0>0$.  Then 
for sufficiently large $\Lam\Subset\Gamma$
 the following estimate holds for all $t\in [-t_0,\ t_0]$\rm{:}
\begin{align}
\label{eq:TIMEeps}
\left \Vert \alpha_t\bigl(\hatAlam\bigr)- \altLam\bigl(\hatAlam\bigr) \right \Vert
\le  \varepsilon.
\end{align}
\end{prop}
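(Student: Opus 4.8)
The plan is to reduce the estimate \eqref{eq:TIMEeps} to the approximate-innerness hypothesis \eqref{eq:APP} applied to the single local operator $A_0$, using translation invariance to transport the bound across the $|\Lam|$ summands of $\hatAlam$, and using strong continuity together with compactness of $[-t_0,t_0]$ to make the bound uniform in $t$. First I would write
\begin{equation*}
\alpha_t\bigl(\hatAlam\bigr)-\altLam\bigl(\hatAlam\bigr)
=\frac{1}{|\Lam|}\sum_{x\in\Lam}\Bigl(\alpha_t\bigl(\tau_x(A_0)\bigr)-\altLam\bigl(\tau_x(A_0)\bigr)\Bigr),
\end{equation*}
so by the triangle inequality it suffices to bound each term $\bigl\Vert \alpha_t(\tau_x(A_0))-\altLam(\tau_x(A_0))\bigr\Vert$ by $\varepsilon$ and then average. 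The key point is that, for a site $x$ deep inside $\Lam$ (say with $\mathrm{dist}(x,\Gamma\setminus\Lam)$ larger than some $L$ depending on $t_0$, $A_0$, and the interaction range), the cut-off dynamics $\altLam$ acting on $\tau_x(A_0)$ over the window $[-t_0,t_0]$ agrees with, or is close to, the infinite-volume dynamics $\alpha_t(\tau_x(A_0))$; this is exactly where Lieb--Robinson-type locality, already packaged into \eqref{eq:APP} for short-range spin systems, enters. By translation invariance of both $\{\alpha_t\}$ and the defining interaction, the convergence $\altLamn\bigl(\tau_x(A_0)\bigr)\to\alpha_t\bigl(\tau_x(A_0)\bigr)$ can be stated once for $A_0$ at the origin (with $\Lam$ replaced by the translated region $\Lam-x$) and then applied at every interior site simultaneously with the \emph{same} rate.

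Next I would handle the two sources of non-uniformity. For the $t$-dependence: the maps $t\mapsto\alpha_t(A_0)$ and $t\mapsto\altLamn(A_0)$ are norm-continuous (strong continuity of $\{\alpha_t\}$, and $\altLamn$ is inner by a bounded generator $\Hlamn$), so by a standard compactness argument on $[-t_0,t_0]$ one upgrades the pointwise-in-$t$ convergence in \eqref{eq:APP} to uniform-in-$t$ convergence on the compact window; concretely, pick a finite $\delta$-net of times, get a single large $\Lam$ working at all net points, and interpolate using the modulus of continuity. For the boundary sites: split $\Lam$ into an interior core $\Lamkiku$ (sites at distance $>L$ from the complement) and a shell $\Lam\setminus\Lamkiku$; interior summands contribute at most $\varepsilon/2$ each by the uniform convergence just established, while each shell summand is crudely bounded by $2\Vert A_0\Vert$, so the shell's total contribution is at most $\frac{|\Lam\setminus\Lamkiku|}{|\Lam|}\cdot 2\Vert A_0\Vert$, which tends to $0$ as $\LamtoG$ along any reasonable (van Hove) sequence. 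Choosing $\Lam$ large enough that both the interior error and the surface-to-volume ratio are below $\varepsilon/2$ finishes the estimate.

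The main obstacle I anticipate is bookkeeping the interplay of the three limits cleanly: making precise the statement ``$\altLam$ restricted to operators supported well inside $\Lam$ coincides, up to a small error over $[-t_0,t_0]$, with $\alpha_t$'' in a way that is uniform over the interior sites and over $t$, and simultaneously controlling the vanishing surface fraction. This is not deep --- it is exactly the content of Lieb--Robinson bounds and the construction of $\{\alpha_t\}$ as a norm limit of $\{\altLam\}$ cited from \cite{ROB68}\cite{POWSAK} and Theorem 6.2.4 of \cite{BR} --- but the uniformity over the $|\Lam|$-fold average, rather than for a single fixed observable, is the detail that needs care. One should also be mindful that the region $\Lam$ must be taken along a sequence with $|\parext\Lam|/|\Lam|\to 0$ (guaranteed here by \eqref{eq:Hxuniform} and the cubic-lattice setup) so that the shell term genuinely vanishes; for pathological shapes the statement could fail, so the phrase ``sufficiently large $\Lam$'' should be read with that caveat, consistent with the conventions of Section~\ref{subsec:WOK}.
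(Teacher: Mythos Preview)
Your proposal is correct and follows essentially the same route as the paper's proof: expand $\hatAlam$ as an average of translates, use translation invariance together with approximate innerness \eqref{eq:APP} to find a single radius $m$ (your $L$) so that $\bigl\Vert\alpha_t(A_x)-\altLam(A_x)\bigr\Vert<\varepsilon/2$ whenever $\Lam\supset\Lam_x(m)$, split $\Lam$ into the interior core $\Lamkikueps$ and the shell $\Lamhanpaeps$, bound shell terms crudely by $2\Vert A_0\Vert$, and conclude via the van Hove surface-to-volume ratio. Your explicit compactness/$\delta$-net argument for uniformity over $t\in[-t_0,t_0]$ is in fact slightly more detailed than the paper, which simply asserts the existence of the $t$-uniform $m$ without spelling out this step.
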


 \begin{proof}
For each $n\in \NN$
define 
\begin{align}
\label{eq:Lamn}
\Lamn:= \left\{(x_1, x_2, \cdots, x_\mu)\in \Zmu;\; 0\le |x_i|\le \frac{n}{2}
   \right\}
\Subset \Zmu. 
\end{align}
It is a  box region centered at 
the origin $0 \in\Zmu$  and  
  ${\rm{diam}}(\Lamn)=n$ or $n-1$. 
Let $\Lamxn:= \Lamn+x$, i.e. 
  the translation of $\Lamn$ by $x\in \Zmu$.
From the assumption \eqref{eq:APP}
we have  
\begin{equation}
\label{eq:Ninfinitedyn}
\alpha_t(A)=\lim_{n\to \infty}
\altLamn(A) 
 \quad  \text{for each}\  A\in \Al
  \ \text{and }\  t\in \R.
\end{equation}
By  \eqref{eq:hatAlam}  for any $\Lam\Subset\Gamma$
\begin{align}
\label{eq:estGWA}
\altLam\bigl(\hatAlam\bigr)= 
 \frac{1}{|\Lam|} \sum_{x\in\Lam} 
 \altLam\bigl(A_x \bigr).
\end{align}
 Since the time evolution under consideration 
 is  space  translation invariant,   
 for any fixed  $\varepsilon>0$ and $t_0>0$,  there 
 exists a constant $m(>r)\in\NN$ (that is independent of $x\in \Gamma$)  
 such that  the following estimate holds
\begin{equation}
\label{eq:appDYNGvsGEN}
\Bigl\Vert \alpha_t(A_x)-\altIx(A_x)
\Bigr \Vert < \varepsilon/2 
 \quad \text{for any} \ t\in [-t_0,\ t_0]\ \text{and}\ x\in \Gamma,
\end{equation}
where  $\Ix$ is  any  finite subset 
that includes the box region $\Lamxm$ centered at $x$:  
\begin{equation}
\label{eq:Ix}
\Ix\supset \Lamxm.
\end{equation}
We now take  a sufficiently large  $\Lam\Subset\Gamma$
 such that $\Lam\ni 0$. 
We  divide $\Lam$ into the following two complement regions: 
 \begin{align}
\label{eq:Lamdivide}
\Lamkikueps:=\{x\in \Lam;\; \Lamxm \subset \Lam\}, 
\quad \Lamhanpaeps:=\Lam\setminus\Lamkikueps, 
\end{align}
 where the subscript indicates  
$\varepsilon$-dependence, but 
 $t_0$ dependence is omitted as there is no fear of confusion.
Hence by the obvious inclusion  
$\Lam \supset \bigcup_{x\in \Lamkikueps} \Lamxm$, 
from  \eqref{eq:appDYNGvsGEN}
 \eqref{eq:Ix}
it follows that 
 \begin{equation}
\label{eq:appxhairu}
\Bigl\Vert \alpha_t(A_x)-\altLam(A_x)
\Bigr \Vert < \varepsilon/2 
 \quad \text{for any} \ t\in [-t_0,\ t_0] \ \text{and}\  x\in \Lamkikueps.
\end{equation}
Let   $a:=\Vert A_0 \Vert$.
By using  \eqref{eq:estGWA}
\eqref{eq:appxhairu} 
we obtain
\begin{align}
\label{eq:estSA}
&\left \Vert \alpha_t\bigl(\hatAlam\bigr)- \altLam\bigl(\hatAlam\bigr) \right 
\Vert=
 \frac{1}{|\Lam|}   \left \Vert \sum_{x\in\Lam} \left( 
\alpha_t\bigl(A_x\bigr)- 
\altLam\bigl(A_x \bigr) \right) \right \Vert
\nonumber \\
&\le  
 \frac{1}{|\Lam|}  \sum_{x\in\Lam} \left \Vert 
\alpha_t\bigl(A_x\bigr)- 
\altLam\bigl(A_x \bigr) \right \Vert
\nonumber \\
&=
\frac{1}{|\Lam|}  \sum_{x\in\Lamkikueps} \left \Vert 
\alpha_t\bigl(A_x\bigr)- 
\altLam\bigl(A_x \bigr) \right \Vert
\nonumber\\
&\quad +
\frac{1}{|\Lam|}  \sum_{x\in\Lamhanpaeps} \left \Vert 
\alpha_t\bigl(A_x\bigr)- 
\altLam\bigl(A_x \bigr) \right \Vert
\nonumber  \\
&\le 
\frac{1}{|\Lam|}  \sum_{x\in\Lamkikueps} \left \Vert 
\alpha_t\bigl(A_x\bigr)- 
\altLam\bigl(A_x \bigr) \right \Vert
+
\frac{1}{|\Lam|}  \sum_{x\in\Lamhanpaeps}  \Vert 
2 A_x \Vert
\nonumber  \\
&\le  
\frac{1}{|\Lam|}  \cdot |\Lamkikueps|\cdot  \frac{\varepsilon}{2} 
+
\frac{1}{|\Lam|}  |\Lamhanpaeps| \cdot 2a 
\nonumber  \\
&\le  
 \frac{\varepsilon}{2} + 2 a\frac{|\Lamhanpaeps|}{|\Lam|}  
\end{align}
By  \eqref{eq:Lamdivide}
 for each fixed $\varepsilon>0$ we have 
\begin{align}
\label{eq:vanHove}
\lim_{\Lam \rightsquigarrow \infty}
\frac{|\Lamhanpaeps|}{
|\Lam|}=0, 
\end{align}
where  the above  infinite-volume limit 
$\Lam \rightsquigarrow \infty$
is  the so called  van Hove limit, see Sec. 6.2.4 of \cite{BR}.
By \eqref{eq:estSA}
and \eqref{eq:vanHove}
  for sufficiently large $\Lam\Subset\Gamma$
 we obtain the estimate \eqref{eq:TIMEeps}. 
 \end{proof}

Lemma  \ref{lem:new}
 shows 
  that the difference 
 between these two different  time evolutions of  any local order parameter
   disappears in the infinite-volume limit.
The  following result follows.

\begin{prop}
\label{prop:LRO-kurabe}
Under the same assumption as 
 Lemma  \ref{lem:new}, 
the following identity holds{\text{:}}
\begin{equation}
\label{eq:tripleLIMITs}
\lim_{\LamiitoG}
\lim_{\LamitoG}
\rhoGlami\left(\altLamii\bigl( \hatAlamii \bigr) \hatBlamii  \right)
=\corweakABzero(t), 
\end{equation}
where the right-hand side  is given by the formula 
 $\corvpAB(t)$
\eqref{eq:corvpAB} 
 in  Section~\ref{subsubsec:LROCstar}
with  the approximately inner time evolution $\{\alpha_t\in\AUT,\;t\in \R\}$  as our $\cstar$-dynamics, 
 $A=A_0$, $B=B_0$, and $\vp=\rhoGinf$.
\end{prop}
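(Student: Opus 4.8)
The plan is to take the two prescribed infinite--volume limits one at a time. The inner limit $\LamitoG$ collapses the local Gibbs state $\rhoGlami$ to a limiting Gibbs state $\rhoGinf$; the outer limit $\LamiitoG$ is then handled by first trading the cut--off dynamics $\altLamii$ for the genuine dynamics $\alpha_t$ via Lemma~\ref{lem:new}, and then recognising what remains as the Ces\`aro average producing the macroscopic densities of Section~\ref{subsubsec:densities}. The limiting Gibbs state $\rhoGinf$ entering the right--hand side is, by \eqref{eq:ACCUsmall}, a $\beta$--KMS state for the approximately inner $\cstar$--dynamics; and --- as is in any case tacitly required for that side to be well defined --- it is translation invariant (translation--invariant limiting Gibbs states exist: the translation group being abelian, hence amenable, a Markov--Kakutani argument produces a translation--invariant state in the weak-$\ast$ closed convex hull of $\setACCUM\subset\setKMSb$). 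Thus $\vp:=\rhoGinf\in\setEQU\cap\HOMOSTATE$ and the constructions of Section~\ref{subsubsec:LROCstar} apply with $A=A_0$ and $B=B_0$.

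The inner limit is immediate: for fixed $\Lamii\Subset\Gamma$ and $t\in\R$ the operator $\altLamii(\hatAlamii)\hatBlamii$ is a single element of $\Al$, independent of $\Lami$, so weak-$\ast$ convergence along the net defining $\rhoGinf$ gives
\begin{equation*}
\lim_{\LamitoG}\rhoGlami\bigl(\altLamii(\hatAlamii)\hatBlamii\bigr)=\rhoGinf\bigl(\altLamii(\hatAlamii)\hatBlamii\bigr).
\end{equation*}
Now fix $\varepsilon>0$ and a compact interval $[-t_0,t_0]$. Since $\Vert\hatBlamii\Vert\le\Vert B_0\Vert=:b$ uniformly in $\Lamii$ and $\rhoGinf$ is a state, Lemma~\ref{lem:new} yields, for all sufficiently large $\Lamii\Subset\Gamma$ and all $t\in[-t_0,t_0]$,
\begin{equation*}
\bigl|\rhoGinf\bigl(\altLamii(\hatAlamii)\hatBlamii\bigr)-\rhoGinf\bigl(\alpha_t(\hatAlamii)\hatBlamii\bigr)\bigr|\le\bigl\Vert\alpha_t(\hatAlamii)-\altLamii(\hatAlamii)\bigr\Vert\,b\le\varepsilon b .
\end{equation*}
Hence it suffices to evaluate $\lim_{\LamiitoG}\rhoGinf\bigl(\alpha_t(\hatAlamii)\hatBlamii\bigr)$. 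Because $\{\alpha_t\in\AUT,\;t\in\R\}$ is translation invariant it commutes with every $\tau_x$, so (recall $A_x=\tau_x(A_0)$) $\alpha_t(\hatAlamii)=\frac{1}{|\Lamii|}\sum_{x\in\Lamii}\alpha_t\bigl(\tau_x(A_0)\bigr)=\frac{1}{|\Lamii|}\sum_{x\in\Lamii}\tau_x\bigl(\alpha_t(A_0)\bigr)=m_{\Lamii}\bigl(\alpha_t(A_0)\bigr)$ in the notation of \eqref{eq:mLamA}, a Ces\`aro average of the single element $\alpha_t(A_0)\in\Al$; likewise $\hatBlamii=m_{\Lamii}(B_0)$.

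It remains to pass to the GNS triple $(\Hilvp,\pivp,\Omevp)$ of $\vp=\rhoGinf$ and invoke Section~\ref{subsubsec:densities}. Put $X_{\Lamii}:=\pivp\bigl(m_{\Lamii}(\alpha_t(A_0))\bigr)$ and $Y_{\Lamii}:=\pivp\bigl(m_{\Lamii}(B_0)\bigr)$, so $\rhoGinf\bigl(\alpha_t(\hatAlamii)\hatBlamii\bigr)=\bigl(\Omevp,\;X_{\Lamii}Y_{\Lamii}\,\Omevp\bigr)$. Taking the infinite--volume limit along a van Hove net (and, if necessary, a subnet), the uniformly bounded nets $X_{\Lamii}$ and $Y_{\Lamii}$ converge in the weak operator topology to central elements of $\vnv$ which, by \eqref{eq:mAacc} and \eqref{eq:gmacro}, are exactly the densities $\hatvpAinfKt$ and $\hatvpBinf$ appearing in $\corvpAB(t)$; moreover, since $\vp$ is translation invariant, the unitaries $U^{\tau}_x$ implementing $\tau_x$ fix $\Omevp$, so von Neumann's mean ergodic theorem gives the norm convergence $Y_{\Lamii}\Omevp\to\hatvpBinf\Omevp$ and, on the dense subspace $\vnvppri\Omevp$, the strong convergence $Y_{\Lamii}\to\hatvpBinf$ (and likewise for $X_{\Lamii}$). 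Combining strong convergence of $X_{\Lamii}$, uniform boundedness, and norm convergence of $Y_{\Lamii}\Omevp$, the vectors $X_{\Lamii}Y_{\Lamii}\Omevp$ converge in norm, whence $\bigl(\Omevp,X_{\Lamii}Y_{\Lamii}\Omevp\bigr)\to\bigl(\Omevp,\;\hatvpAinfKt\,\hatvpBinf\,\Omevp\bigr)=\corweakABzero(t)$. Together with the previous paragraph and letting $\varepsilon\downarrow 0$, this is the asserted identity.

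The genuine obstacle is this last step. The macroscopic densities $\hatvpAinfKt,\hatvpBinf$ are a priori only subnet limits of Ces\`aro averages of \emph{operators} --- the two--sided averages $\frac{1}{|\Lamii|}\sum_x U^{\tau}_x(\,\cdot\,){U^{\tau}_x}^{*}$ need not converge on the nose --- and one forms the product of two such nets, so weak operator convergence of each factor does not by itself control $\bigl(\Omevp,X_{\Lamii}Y_{\Lamii}\Omevp\bigr)$. What rescues it is exactly the hypothesis chain: translation invariance of $\rhoGinf$ makes the mean ergodic theorem available and upgrades the convergence on the cyclic vector $\Omevp$ to norm convergence, while the KMS property makes $\Omevp$ separating for $\vnv$, which pins the limiting density down uniquely (removing the subnet) and upgrades the operator convergence to strong convergence on $\vnvppri\Omevp$. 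By contrast, the first two steps --- peeling off the inner limit and invoking Lemma~\ref{lem:new} --- are routine.
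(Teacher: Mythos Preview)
Your proof is correct and follows precisely the route the paper intends: the paper gives no explicit proof of this proposition, merely asserting that ``the following result follows'' from Lemma~\ref{lem:new}, and your argument is the natural way to cash that out --- peel off the inner limit by weak-$\ast$ convergence to $\rhoGinf$, replace $\altLamii$ by $\alpha_t$ via Lemma~\ref{lem:new}, and then handle the outer limit by the mean ergodic theorem in the GNS representation of $\vp=\rhoGinf$.

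Two small comments. First, your Markov--Kakutani aside produces a translation-invariant state in the weak-$\ast$ closed convex hull of $\setACCUM$, which lies in $\setKMSb$ but is not literally a ``limiting Gibbs state'' in the paper's sense (an accumulation point of $\{\rhoGlam\}$); however, translation invariance of $\rhoGinf$ is in any case an implicit hypothesis of the proposition, since the right-hand side $\corweakABzero(t)$ is only defined for $\vp\in\HOMOSTATE$, and for translation-covariant local Hamiltonians along van Hove cubes one does get translation-invariant limits. Second, your identification of the ``genuine obstacle'' --- that weak operator convergence of each factor does not control the product --- and its resolution via the mean ergodic theorem plus the separating property of $\Omevp$ for KMS states is exactly right, and is a point the paper glosses over entirely; in particular your observation that separability of $\Omevp$ pins down the central limits uniquely (eliminating the subnet ambiguity in \eqref{eq:mAacc}) is a detail the paper's formulation of $\coromeAB$ leaves open.
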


\begin{remark} 
\label{rem:precise}
Proposition \ref{prop:LRO-kurabe} does not imply   that       
\begin{equation}
\label{eq:conjecLIMIT}
\lim_{\LamtoG} \corlocGibbsAB(t)
=\corweakABzero(t), \quad t\in \R.
\end{equation}
This  equation  seems  to be difficult to prove or disprove. 
We thus consider 
 that   Corollary \ref{coro:NoLRO}  based on  the $\cstar$-algebraic 
 LRO    is   independent 
 of  the  no-go statement of \cite{WO, wok} based on 
 the Griffiths-type LRO. 
\end{remark}

\subsection{On generality of  assumptions}
\label{subsec:Generality}
We have seen that 
the formulations of  ours  and   \cite{WO, wok}  
  are   different. 
Those  may  describe   different physics situations, and 
 thereby,   
 the meaning of  ``non-existence of  periodic 
  temporal correlation functions''   by us 
 and that of  \cite{WO, wok}  are not same, cf.
 Remark \ref{rem:precise}.
Nevertheless, 
we shall compare  our work with   \cite{WO, wok} 
in terms of  the generality of  assumptions
 in order to find the precise validity   
 of  these similar but different no-go statements.

\subsubsection{Peridoic and aperiodic crystals}
\label{subsubsec:Aperiodic}
Theorem \ref{thm:MAIN}
thoroughly excludes any 
 type of  quantum time crystals  
 such as  periodic space-time  crystals 
 as  in  \cite{LI}  and  also 
 aperiodic time crystals as in   \cite{QuasiA}
\cite{QuasiGMS} for  equilibrium states.
One may  imagine 
temporal orders  similar to  interfaces (domain walls) \cite{DOB72}
 or  turbulent  crystals (or chaotic crystals)  \cite{RU82}. 
 Those are negated by Theorem \ref{thm:MAIN} as well.
On the other hand, such inhomogeneous  quantum time crystals 
can  not   be  precluded  
  by   \cite{WO, wok}, as it   
 essentially requires    the  spatial homogeneity in its proof.

\subsubsection{Equilibrium states under consideration}
\label{subsubsec:Equilibrium-inclusion}
 The inclusion  of \eqref{eq:ACCUsmall}
  becomes identity 
  for  some general  
 translation invariant  classical spin lattice models
 as shown in  Theorem 6.63 \cite{Friedli}.
 However, for   
  quantum spin lattice models,  
 it is not known whether the  
 inclusion of \eqref{eq:ACCUsmall} is  strict  or not.
In  the method   
 of   \cite{KT-Finite} on which Watanabe-Oshikawa-Koma's argument relies,  
 the sequence 
 of symmetric local  Gibbs states   is taken. 
 However,  according to \cite{COQ}, there is an equilibrium state 
 that can not be obtained by 
  limits of  symmetric  local Gibbs states. 
For the boson system,  the inclusion  of \eqref{eq:ACCUsmall}
 is strict, as   there are  KMS states
 which are not  limiting Gibbs states \cite{SMED}.
Thus,  the set of translation invariant equilibrium states
  considered  in this review 
is  much larger than that considered  in \cite{WO, wok}.

\subsubsection{Is  the 
 Lieb-Robinson bound argument really necessary?}
\label{subsubsec:Models}
In  \cite{WO, wok} 
 the  Lieb-Robinson bound estimate \cite{LRbound} for local Hamiltonians 
is  used   in the derivation of   \eqref{eq:WOK-temp-nai}. 
 However,   is the Lieb-Robinson bound 
 essential  for the absence of genuine quantum time crystals?
It has been known \cite{BR} that  the  Lieb-Robinson  bound estimate
yields  approximately inner  $\cstar$-dynamics,  
which  is  our  important assumption.
On the other hand, the converse implication is not  true in general.  
 There are    $\cstar$-dynamics  
  not satisfying  the Lieb-Robinson  bound;
  examples   are given     by 
  quasi-free automorphisms  on the fermion lattice system \cite{ARCAR}.
 Proposition \ref{prop:INV},  
  Theorem \ref{thm:MAIN} and  
 Corollary \ref{coro:NoLRO}
 can be applied 
to  such long-range $\cstar$-dynamics,   
whereas the no-go statement 
of \cite{WO, wok} cannot. 
Let us mention  other 
long-range models \cite{BRUPEDRA} for   which 
 our results  are valid.

\subsubsection{Limitations of $\cstar$-algebraic approach}
\label{subsubsec:LIMIT}
So far we have emphasized wide generality of our results.
 We now  mention   restrictions 
  of our results based on  the 
 existence of   $\cstar$-dynamics.
 We note that concrete  examples of 
 $\cstar$-dynamics are rather  exceptional such as  
  short-range quantum spin lattice  models
  and  non-interacting
 quantum field models  \cite{BR}.
In general, construction of  
 $\cstar$-dynamics  (or $\wstar$-dynamics)
 for quantum-field models is  formidable. 
Some long-range  quantum spin lattice models
 do not have their $\cstar$-dynamics.
For example,  a  strong coupling  BCS  model does not 
 have  its  infinite-volume 
  time evolution  as  $\cstar$-dynamics;
it  only exists   in a state-dependent manner \cite{THIRBCS} 
\cite{BRUPEDRA-BCS}. 
On the other hand, 
the formulation of 
WOK  temporal correlation functions
 is more  flexible.
Kozin-Kyriienko \cite{KKlong} showed 
 that  some infinite-range Hamiltonian 
  generates  a non-trivial  periodic   
 WOK  temporal correlation function
 and they claimed that it is an example 
 of genuine quantum crystals. Obviously, 
 Corollary \ref{coro:NoLRO}   cannot be applied to
Kozin-Kyriienko's  model. 
 (The  feasibility of the  long-range Hamiltonian 
  and the highly entangled ground states  of 
 Kozin-Kyriienko's  model 
 has been  critically   argued in \cite{CommLRI}.)

\subsection{Is theory of relativity    relevant?}
\label{subsec:RELATIVITY}

Why are  quantum time crystals  impossible, whereas 
 spatial crystals in equilibrium  states are common?
In  the article  \cite{SCIAMEWIL}  to general audience,  
   Wilczek  \cite{WIL} 
 addresses the theory of relativity as his    motivation 
 to pursue  the above question.
 In  \cite{WO}the theory of relativity is also addressed.
 Although we cannot  
  specify  the  intension  of these authors, 
 we highlight the following seemingly relevant facts
 of the theory of relativity and  
 quantum equilibrium states. 
\begin{enumerate}   
\item Thermal equilibrium states in 
a fixed Lorentz-frame   can  be 
characterized by the KMS 
 condition  \cite{BU-JUG}. 
 The KMS states 
   violate the Lorentz-symmetry \cite{OJIMA}, while 
 they   always  preserve the time-translation symmetry 
  as shown in Proposition \ref{prop:INV}.

\item The spectrum condition 
 of local quantum 
 physics \cite{HAAG}  postulates that the spectrum of 
Hamiltonian and momentum operators on the Hilbert 
 space of   a vacuum state  
is  included in the forward-light cone in Lorentz space-time. 
(Here  the vacuum 
state  is not necessarily  Lorentz-invariant.)
The spectrum condition 
 forbids   crystalline  structure in {\emph{space}} 
 \cite{ARA64}.  See  also Theorem  4.6 
\cite{ARbook}  and  Theorem 3.2.4 \cite{HAAG}. 
Thus   relativistic  vacuum states  allow 
  crystalline  structure   {\emph{neither}} in the  space direction 
nor in the  time direction.
\end{enumerate}   

It looks that the theory of relativity
 will give   certain  restrictions upon possible crystal structure 
 on space-time.
From  a purely scientific perspective, 
we cannot find  a meaningful link between 
 the notion of genuine quantum crystals and the  theory of   relativity.

\subsection{On non-equilibrium  quantum time crystals}
\label{subsec:Validity}
In the final part  of \cite{wok},  
   quantum time crystals 
  by non-Gibbsian states 
are discussed.
In  the $\cstar$-algebraic language as well, 
 a general formula  of certain  non-equilibrium   
quantum time crystals can be given 
 putting aside their concrete  realization.
  It is  a  naive generalization of the notion of  
 SSB to stationary states 
   by using  
 the identification of factor states and pure phases as follows.
Suppose that  $\tilde \psi$ is  an   invariant state  
under the time-translation symmetry $\{\alpha_t\in\AUT,\;t\in \R\}$
 but  it is not an equilibrium state.
 Suppose that $\tilde \psi$ has the following specific factorial  
 decomposition:   For some $p>0$ 
\begin{equation}
\label{eq:GSSB}
\tilde \psi = \int_0^p dt   \,
\alpha_t^{\ast} \psi,\quad 
\psi \in  \FACState,  
\end{equation}
where $\psi$ is a factor state  
 breaking   the  time-translation symmetry 
$\{\alpha_t\in\AUT,\;t\in \R\}$
 but  invariant under its discrete  subgroup 
 $\{\alpha_t\in\AUT,\;t\in p \Z\}$. 
If   $\psi$ is a 
homogeneous  state, then 
by the equivalence of  the existence of non-trivial 
 $\cstar$-algebraic LRO and  that   of multiple phases  \cite{SEW2014}, 
the function  $\cortilpsiAA(t)$
 defined by the formula \eqref{eq:corvpAB}
 oscillates   periodically in time 
  for some local order parameter $A\in \Al$.

\section{Discussion}
\label{sec:DIS}
Based on  the  KMS condition,  
 we  gave   no-go statements    
of genuine quantum time crystals in the $\cstar$-algebraic formulation. 
The above   no-go statements 
 based on $\cstar$-dynamics  have  wider generality than \cite{WO, wok}
 in several  points, although they  were 
 essentially obtained  in 1970s.
 Our    viewpoint 
upon the notion of 
 genuine quantum time crystals 
and the   no-go statement 
 used  in the physics literature
  is  contrasting to 
   \cite{HASACHA2} \cite{KHE} \cite{SACHA}.

Now let us come back to the following fundamental 
 problem addressed in the discussion of \cite{WO}:
Why are  quantum time crystals  impossible for  equilibrium states, whereas 
 spatial crystals in equilibrium  states are common?
In Section \ref{subsec:RELATIVITY}, 
 we emphasized    that 
 the theory of relativity
 is  irrelevant to the above question.
 We consider   that the impossibility  of
 genuine quantum  time crystals 
 is due to  rigid   stability  of  quantum equilibrium states.  
  The KMS condition  is known to imply and to be  implied by
  several  characterizations of equilibrium,   see  \cite{BR}.
Thus,  Theorem \ref{thm:MAIN} given in terms of the 
 KMS condition  can be rephrased   as follows:

\medskip

$\bullet$ The variational 
 principle for equilibrium states 
  forbids 
the existence  of genuine quantum time crystals.

\medskip

$\bullet$  The  passivity   by Pusz-Woronowicz 
forbids the existence  of genuine quantum time crystals. 

\medskip

This review  has narrowed 
  the possibility  of genuine quantum time crystals. 
But as we have noted, 
 there are   many quantum  models  
 that can not be formulated in  the $\cstar$-algebraic
 formulation. 
Although  we consider that 
 our no-go statements have wider genarality beyond the  $\cstar$-algebraic
 formulation,  these   should not  be 
 applied to such models unless there is a rigorous proof.

\section*{Acknowledgments}
I   thank my  colleagues 
 for  conversations     on  the issue.
 Prof.  Araki    told me about his   scientific interaction 
  with Prof. Ryogo Kubo  about the  KMS condition. 
I acknowledge  Kakenhi (grant no.
21K03290) for the financial support.
 I thank 
  Institute of Mathematics for Industry, Joint Usage/Research Center 
 in Kyushu University for the financial support to participate in   
   FY2022 Workshop (I) ``Theory and experiment for time, quantum measurement and semiclassical approximation -Interface between classical and quantum Theory-". 

\end{document}